\renewenvironment{proof}{{\bfseries Proof.}}{\qedsymbol}
\renewcommand{\qedsymbol}{$\blacksquare$}
\newcounter{mytempeqncnt}
\begin{document}

\title{Secrecy Performance Analysis of Integrated RF-UWOC IoT Networks Enabled by UAV and Underwater-RIS}

\author{
\IEEEauthorblockN{Abrar Bin Sarawar, A. S. M. Badrudduza,~\IEEEmembership{Member,~IEEE,} Md. Ibrahim,~\IEEEmembership{Member,~IEEE,} Imran Shafique Ansari,~\IEEEmembership{Senior Member,~IEEE,} Heejung Yu,~\IEEEmembership{Senior Member,~IEEE}
}}

\twocolumn[
\begin{@twocolumnfalse}
\maketitle
%&&&&&&&&&&&&&&&<ABSTRACT>&&&&&&&&&&&&&&
\begin{abstract}
\section*{Abstract}

In the sixth-generation (6G) Internet of Things (IoT) networks, the use of UAV-mounted base stations and reconfigurable intelligent surfaces (RIS) has been considered to enhance coverage, flexibility, and security in non-terrestrial networks (NTNs). In addition to aerial networks enabled by NTN technologies, the integration of underwater networks with 6G IoT can be considered one of the most innovative challenges in future IoT. Along with such trends in IoT, this study investigates the secrecy performance of IoT networks that integrate radio frequency (RF) UAV-based NTNs and underwater optical wireless communication (UOWC) links with an RIS. Considering three potential eavesdropping scenarios (RF signal, UOWC signal, and both), we derive closed-form expressions for secrecy performance metrics, including average secrecy capacity, secrecy outage probability, probability of strictly positive secrecy capacity, and effective secrecy throughput. Extensive numerical analyses and Monte Carlo simulations elucidate the impact of system parameters such as fading severity, the number of RIS reflecting elements, underwater turbulence, pointing errors, and detection techniques on system security. The findings offer comprehensive design guidelines for developing such a network aiming to enhance secrecy performance and ensure secure communication in diverse and challenging environments.
\end{abstract}
%&&&&&&&&&&&&&<END-ABSTRACT>&&&&&&&&&&&&

%$$$$$$$$$$$$$$$<KEYWORDs>$$$$$$$$$$$$$$
\begin{IEEEkeywords}
\section*{Keywords} 

Reconfigurable intelligent surface, unmanned aerial vehicle, non-terrestrial network, underwater optical wireless communication, physical layer security

\end{IEEEkeywords}
%$$$$$$$$$$$$$<END-KEYWORD>$$$$$$$$$$$$$
\end{@twocolumnfalse}
]

%%%%%%%%%%%%%%%%<SECTION>%%%%%%%%%%%%%%%

\section{Introduction}

\subsection{Background}

Advancements in machine-type communications, particularly in Internet of Things (IoT), have led to the proliferation of connecting devices with diverse communication needs which poses significant challenges in network design. In non-terrestrial networks (NTNs), mobile base stations (BSs), mounted on unmanned aerial vehicles (UAVs) (a.w.a. drones), offer a flexible network infrastructure for uncovered areas by ground BSs \cite{ParkCCNC23}. Bypassing the need for extensive terrestrial infrastructure deployment, UAVs enhance network coverage and performance across various IoT applications and use cases. UAV-mounted BSs offer network flexibility and scalability for future networks by flying on-demand to locations, bypassing terrestrial constraints, and ensuring connectivity with both ground users and BSs \cite{AI6G_ICTE2022}. In addition to aerial networks, the integration of underwater networks into 6G IoT networks has been considered an essential research area, especially in IoT applications such as underwater monitoring with unmanned underwater vehicles (UUVs).
This makes them essential components of an integrated network spanning space, air, land, and sea, and optimal for serving predictable traffic demands of massive machine-type communication (MMTC) and IoT networks, where nodes are typically static and traffic demands are predictable \cite{mignardi2020unmanned}.

Nowadays, the reconfigurable intelligent surface (RIS) plays a pivotal role in the landscape of IoT-enabled 6G wireless applications, adeptly addressing challenges by modifying phase shifts and reflecting signals to intended locations  \cite{10413214}. An RIS is now a promising solution to address localized coverage gaps in crowded city instances and difficult residential propagation environments and can easily be mounted on indoor or outdoor walls. Beyond traditional NTN and/or underwater optical wireless communication (UOWC) networks, the integration of UAV-based NTN and  UOWC technologies demonstrates valuable impacts in achieving wider coverage and higher data rate \cite{10460296}. The integration of RISs into an NTN-UOWC dual-hop link has become essential, offering cutting-edge research avenues for seamlessly connecting terrestrial and underwater networks. Furthermore, the signal modification technique of an RIS can be leveraged to minimize propagated signals to eavesdroppers, introducing a new dimension to physical layer security (PLS) \cite{DCN_2024:Bae}.

\subsection{Literature Review}
RISs comprised of cost-effective meta-surfaces have been recognized as a prospective technology in wireless communication, offering creative ways to enhance performance across diverse dimensions. In relay systems, radio signals are amplified to extend transmission range. On the other hand, an RIS acts passively to reflect signals by utilizing the meta-surfaces as reflective and refractive radio mirrors and an RIS is regarded as a highly promising technology for optimizing data transmission efficiency. 
%In \cite{zhang2021performance}, the authors demonstrated a single-input single-output (SISO) system that considered the interdependence of amplitude and phase responses in the reflecting elements of the RIS. 
%According to a further study \cite{zhi2021statistical}, it was shown that massive multi-input multi-output (MIMO) systems including an RIS surpassed non-RIS systems in a low signal-to-noise ratio (SNR) region.
A comparative analysis of coverage extension of RIS-aided systems compared to direct link was unveiled in \cite{yang2020coverage}. In \cite{trigui2022performance} presented a reliable analytical basis for evaluating the maximum diversity order and outage probability (OP) in RIS-aided systems. When compared to other active RIS designs, the recommended design in \cite{tasci2022new} was proficient in both affordability and energy efficiency. For RIS-aided millimeter wave (mmWave) and sub-THz systems, channel estimation techniques were studied in \cite{CE_RIS_VTM22}. In \cite{TCOM24:RIS}, the design approaches of RIS phase shift were investigated. The concept of underwater RISs was introduced in \cite{underwater_RIS}. However, the detailed performance analysis of underwater links including RISs has been rarely investigated.  

Nowadays, dual-hop networks, especially those integrating RF and optical communication technologies, have the ability to broaden coverage and enable higher data rates. Recently, there has been growing interest in using UAVs to enhance the connectivity and coverage of dual-hop networks, especially in areas where traditional infrastructure is unavailable or impractical. In particular,  6G NTN enabled by UAVs as well as satellites have gathered research interests to realize three-dimensional (3D) global coverage. A notable UAV relay method was approached for a hybrid RF-FSO network in \cite{li2023ris} emphasizing the critical role of minimizing pointing errors to reduce OP. %After that, in the very similar model \cite{malik2022performance}, phase shift error was considered due to the fluctuation of UAV in the air. %Furthermore, the assessment of an RIS-assisted UAV communication was conducted in \cite{guo2021performance}, \cite{badarneh2022performance} and the findings indicated that the OP decreased as the number of reflecting elements increased. In \cite{nguyen2022design}, the authors' suggested model was able to keep outage probability low compared to conventional hybrid RF- Free Space Optical (FSO) networks due to the deployment of UAVs. 
In \cite{xu2021performance}, ensuring the collaboration with IoT, the UAV-aided RF-FSO model was considered in terms of OP and bit error rate (BER). The authors demonstrated the closed-form expressions in the presence of atmospheric turbulence. However, for ensuring line-of-sight (LOS) communication, a UAV-aided triple-hop RF-FSO-UOWC network was demonstrated in \cite{agheli2021uav} whereby the end-to-end OP and average BER were evaluated. 

In wireless communication, ensuring the confidentiality of transmitted data is paramount for safeguarding sensitive information, and guaranteeing the success of strategic operations to uphold national security,  especially in military defense applications. A series of studies on the intricacies of secrecy evaluation shed light on innovative techniques e.g., introducing artificial noise (AN), using RISs \cite{ruku2023effects}, incorporating UAVs, etc., to enhance the security of communication systems \cite{Waqas_SENSOR21, Yu_TIFS18, RIS-aid_PLS_ACCESS21, RIS_HARQ, AN_power_PLS, AN_TD_tradeoff, YU2020611, DCN_2024:Son}. Beginning with \cite{xing2015secrecy}, the authors explored PLS mechanisms in simultaneous power and data transfer scenarios by introducing AN-aided transmission systems. In \cite{badrudduza2021security}, a secrecy analysis of RF-UOWC networks was conducted, revealing that a decrease in fading and underwater turbulence (UWT) could fortify the security of such communication systems. In parallel efforts, the authors of \cite{lou2021secrecy, ibrahim2021enhancing} focused on the secrecy performance of similar types of networks, employing sophisticated statistical distributions to model RF and UOWC links along with antenna selection approaches. The authors of \cite{lei2017secrecy} conducted a comprehensive investigation into the secrecy outage probability (SOP) of hybrid RF-FSO models, providing valuable insights into system behavior at high SNR regimes. The nuances of RF and FSO wiretapping were analyzed in \cite{erdogan2021secrecy}, uncovering the superior performance of FSO eavesdropping in weak turbulence conditions.  %To provide non-blockage communication, hybrid RF-FSO alongside an RIS had been considered in \cite{rahman2023ris, ahmed2023enhancing} which depicted the simultaneous wiretapping situation revealing the challenges posed by simultaneous wiretapping scenarios. 
Furthermore, the authors in \cite{hossain2022physical} had provided an RIS-aided hybrid RF-UOWC network, emphasizing the critical role of reflective elements in ensuring secure communication for users and thwarting potential eavesdroppers. Besides, researchers in \cite{lou2022physical} looked into how temperature gradients and air bubble levels affected the UOWC channel within UAV-based RF (NTN)-UOWC networks. %The authors in \cite{illi2022physical} expanded this research by considering dual-hop UAV-based arrangements with DF relay. 

%%%%%%%%%%%%%%%%%%%%%%%%%%%%%%%%%%%%%%%%

%%%%%%%%%%%%%%%%%%%%%%%%%%%%%%%%%%%%%%%%
\subsection{Motivation and Contribution}

It is noteworthy that the existing research predominantly explores secrecy performance in scenarios where the source remains stationary \cite{lei2017secrecy} limiting the applicability in dynamic environments. This gap in literature underscores the need to investigate new models that accommodate dynamic sources, particularly UAVs and low earth orbit (LEO) satellites in NTNs. Moreover, while some recent studies have considered secrecy frameworks within the hybrid RF-FSO paradigm, these investigations have primarily focused on terrestrial network (TN) scenarios, neglecting the unique challenges posed by UOWC. Additionally, although an RIS has shown promise in enhancing security in RF and FSO communications \cite{rahman2023ris} its potential in UOWC scenarios remains largely untapped. Therefore, there exists a critical research gap in exploring the integration of an RIS for enhancing both performance and security in UOWC environments. Furthermore, while previous studies have examined eavesdropping scenarios in either RF or UOWC paths individually \cite{badrudduza2021security, hossain2022physical}, simultaneous eavesdropping in both domains remains unexplored. This limitation hampers the development of comprehensive security strategies for hybrid RF-UOWC networks, necessitating novel approaches to address potential threats effectively. In light of these considerations, we address the research gap by introducing innovative models that incorporate UAVs as dynamic sources and leverage an RIS in the UOWC link for enhanced performance and security.  Taking into account the small-scale fluctuations of the RF links, the generalized $\kappa-\mu$ distribution is considered, which yields the widely-used Rice and Nakagami-$m$ distributions as special cases. All the underwater links are followed by a mixture Exponential-Generalized Gamma model (mEGG) provides a vivid representation of the random nature of irradiance fluctuations on the laser beams. In the end, the following outline holds the main contribution of this study:

\begin{itemize}

\item Our primary contribution lies in addressing the gap in research by proposing a UAV-based RF NTN-UOWC model that encompasses three potential eavesdropping scenarios: RF, UOWC, and simultaneous RF-UOWC eavesdropping. Unlike previous studies that primarily focused on stationary sources, %\cite{lei2020performance, anees2019performance, li2020performance, illi2019physical} 
our model accounts for the dynamic nature of UAVs, enabling a more accurate representation of real-world scenarios. Furthermore, apart from the existing secrecy analysis on hybrid RF NTN-UOWC networks, %\cite{badrudduza2021security, lou2021secrecy, ibrahim2021enhancing, illi2018secrecy2, lou2020physical} 
we extend our analysis to include the deployment of a RIS in UOWC environments offering a more comprehensive security analysis.

\item We derive the cumulative distribution function (CDF) for the dual-hop RF-UOWC link, followed by the derivation of analytical expressions for secrecy metrics such as average secrecy capacity (ASC), SOP, probability of strictly positive secrecy capacity (SPSC), and effective secrecy throughput (EST), validated through Monte Carlo (MC) simulations. We also provide asymptotic expressions of SOP that facilitate a deeper understanding of the system's behavior at a high SNR regime. To the best of the authors' knowledge, the derived expressions offer a level of generality due to the incorporation of statistical distributions that are themselves generalized, 
and have not been previously reported in the literature, representing a novel contribution to the field.

\item Through the derived closed-form expressions, we investigate the influence of fading parameters, number of reflecting elements, UWT, pointing error, detection techniques, etc., on the overall behavior of the system. Additionally, we offer a graphical comparison between RIS-aided and non-RIS systems, providing valuable insights into the efficacy of an RIS deployment in UOWC environments. 
    
\end{itemize}

%%%%%%%%%%%%%%%%%%%%%%%%%%%%%%%%%%%%%%%%
%%%%%%%%%%%%%%%%%%%%%%%%%%%%%%%%%%%%%%%%
\subsection{Organization}
The paper is organized into several key sections to systematically address different aspects of the proposed system. Section II provides a pictorial representation of the entire system. In Section III, we detail the channel models utilized in both RF NTN and UOWC links. Performance evaluation metrics are thoroughly examined in Sections IV through VII. Section VIII presents the results of our analytical analyses and simulation studies, offering insights into the system's behavior and performance under various conditions. Finally, Section IX outlines the conclusions drawn from our findings.

%%%%%%%%%%%%%%%%%%%%%%%%%%%%%%%%%%%%%%%%
%%%%%%%%%%%%%%%%%%%%%%%%%%%%%%%%%%%%%%%%
\section{System Model}
\begin{figure}[t]
\centerline{\includegraphics[width=0.5\textwidth,angle=0]{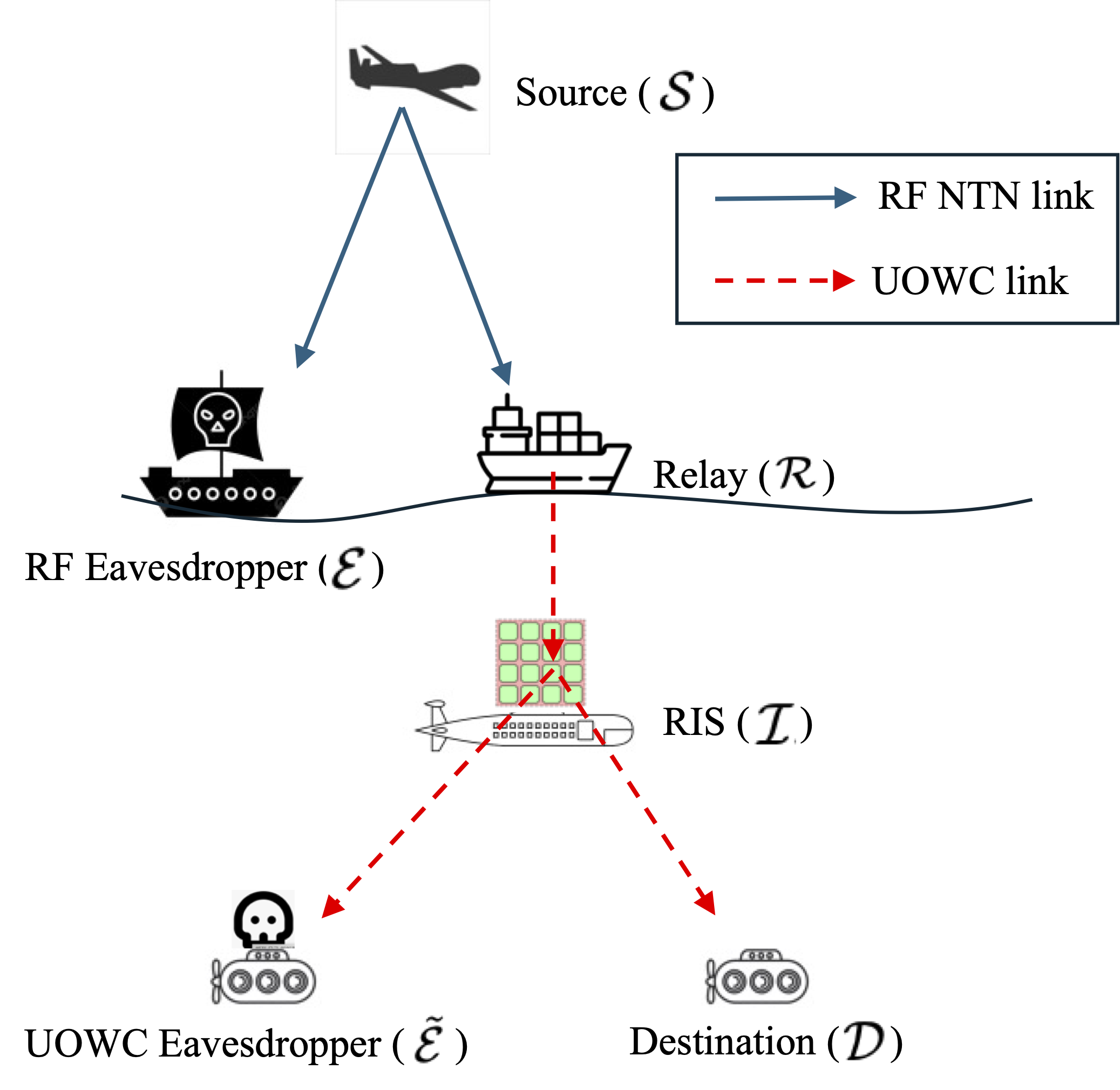}}
    \caption{System model incorporating a source ($\mathcal{S}$), a relay ($\mathcal{R}$), an RIS ($\mathcal{I}$), a destination ($\mathcal{D}$) and two eavesdroppers ($\mathcal{E}$ and $\mathcal{\tilde{E}}$).}
    \label{fig:fig1e}
\end{figure}
The suggested system model illustrates a dual-hop UAV-based RF NTN and an RIS-assisted UOWC wireless communication setup. The initial stage involves a UAV as the source $\mathcal{S}$, a relay $\mathcal{R}$, and an eavesdropper $\mathcal{E}$ in the first hop. The subsequent phase includes an RIS $\mathcal{I}$, a destination $\mathcal{D}$, and an additional eavesdropper $\mathcal{\tilde{E}}$ situated below the water's surface in the second hop. Such a type of UAV-based NTN communication scenario is particularly beneficial in remote areas with diverse IoT applications, where UAV-based BSs can efficiently serve a multitude of devices across different zones during a single flight, maximizing network coverage and performance. In this system, a UAV hovering in the air transmits information during two distinct time slots. During the initial slot, it transfers data to $\mathcal{R}$ positioned on a ship. Conversely, in the subsequent slot, $\mathcal{R}$ receives the RF signal, transforms it into its optical counterpart, and then sends it to $\mathcal{\tilde{E}}$ via $\mathcal{I}$. Here, to maintain uninterrupted underwater communication, $\mathcal{I}$ is deployed on an underwater drone. Unwanted entities, identified as eavesdroppers, can intercept private data through the $\mathcal{S}$-$\mathcal{E}$ and $\mathcal{R}$-$\mathcal{I}$-$\mathcal{\tilde{E}}$ links and monitoring these eavesdropping activities, the three scenarios are considered as follows.

\begin{itemize}
    \item In Scenario-I, the RF NTN communication link between $\mathcal{S}$ and $\mathcal{R}$ is intercepted by $\mathcal{E}$. Consequently, both $\mathcal{R}$ and $\mathcal{\tilde{E}}$ intercept and receive the exact signals transmitted from $\mathcal{S}$.
    
    \item In Scenario-II, both $\mathcal{D}$ and $\mathcal{\tilde{E}}$ capture the identical UOWC signal transmitted from $\mathcal{R}$ through $\mathcal{I}$.  
    
    \item Scenario-III incorporates the consequence of conjointly eavesdropping action at both the RF NTN and UOWC links. 
\end{itemize}

However, within the proposed framework, $\mathcal{R}$ operates as a transceiver with $L_{\mathcal{R}}$ receive antennas and a transmit aperture, $\mathcal{S}$ is equipped with a single antenna, whereas $\mathcal{E}$ is equipped with $L_{\mathcal{E}}$ receive antennas. Moreover, $\mathcal{N}$ denotes the count of reflective meta-surfaces within $\mathcal{I}$, wherein $\mathcal{D}$ and $\mathcal{\tilde{E}}$ solely possess a single photodetector for receiving optical signals. All the RF links adhere to the $\kappa-\mu$ fading model, while the UOWC network, encompassing the $\mathcal{R}$-$\mathcal{I}$-$\mathcal{\tilde{E}}$ and $\mathcal{R}$-$\mathcal{I}$-$\mathcal{D}$ links, encounters the mEGG distribution coupled with pointing error impairments.

\section{Problem Formulation}

\subsection{SNRs of UAV-aided RF NTN Links}

The received signal at $l$th antenna ($l=1,2,\ldots,L_{j}$) of $\mathcal{R}$ and $\mathcal{E}$ can be expressed as
\begin{align}
 \label{eq:eq1}
    y_{jl}=\sqrt{\frac{\varrho _{jl}\mathcal{P}_{T}}{q_{j}^{\alpha_{jl} }}} h_{jl}x_{s}+z_{jl},
\end{align}
where $j\in \{ \mathcal{R}, \mathcal{E} \}$, $h_{jl}$ is the channel gain, $\varrho_{jl}=G_{s}G_{q_{jl}}\left ( \frac{\Upsilon _{s}}{4\pi} \right )^{2}$, $G_{s}$ is the gain of the antenna at $\mathcal{S}$, $G_{q_{jl}}$ is the gain of each antenna at $\mathcal{R}$ and $\mathcal{E}$, $\Upsilon_{s}$ outlines the wavelength, the symbol for the path loss components are $\alpha_{jl}$, the transmitted power is denoted by $\mathcal{P}_{T}$ wherein $x_{s}$ is expressed as the transmitted signal and complex Gaussian noise with zero mean value and fixed
standard deviation $\sigma _{z_{jl}}^{2}$ is symbolized by $z_{jl}$. According to (\ref{eq:eq1}), the received instantaneous SNRs can be accomplished as follows. 
\begin{align}
\label{eq:eq2}
    \gamma_{jl}=\varrho _{jl} \left |h_{jl} \right |^{2}q_{j}^{-\alpha_{jl} }\bar{\gamma} _{jl},
\end{align} 
where the average SNRs are expressed by $\bar\gamma_{jl}=\frac{\mathcal{P}_{T}}{\sigma_{z_{jl}} ^{2}}$. Is it noteworthy that due to the deployment of $L_{j}$-branch MRC receivers at $\mathcal{R}$ and $\mathcal{E}$, the output SNRs at the respective combiners can be defined as $\gamma_{j}=\sum_{l=0}^{L_{j}}\gamma_{jl}$.

\subsection{SNRs of RIS-aided UOWC Links}
In the second hop, the channel coefficient of the $\mathcal{R}$-$\mathcal{I}$ link is specified by $h_{u}$ $(u=1,2,\ldots,\mathcal{N})
$ and in the similar manner, $g_{u,n}$ indicates the channel gain of the $\mathcal{I}$-$\mathcal{D}$ and $\mathcal{I}$-$\mathcal{\tilde{E}}$ links. Thus expressions of the received signal at $\mathcal{D}$ and $\mathcal{\tilde{E}}$ is given by
\begin{align}
\label{aaa}
    y_{n}=\left [\sum_{u=1}^{\mathcal{N}}h_{u}e^{j\theta_{u,n}}g_{u,n}\right]y_{R}+w_{n},
\end{align} 
where $n\in \{ \mathcal{D}, \mathcal{\tilde{E}} \}$ and $y_{R}$ is the re-transmitted signal from $\mathcal{R}$. Here, $h_{u}=\alpha_{u}e^{j\varphi _{u}}$, $g_{u,n}=\beta_{u,n}e^{j\varrho_{u,n}}$, $\alpha_{u}$ and $\beta _{u,n}$ are mEGG random variables (RVs), $\varphi _{u}$ and $\varrho_{u,n}$ outline the uniformly distributed phases of the corresponding links with ranges $[0,2\pi)$, $\theta _{u,n}$ stands for the phase shift induced by $u$th reflecting element of $\mathcal{I}$ and $w_{n}\sim\widetilde{\mathcal{N}}(0, N_{n})$ is the AWGN samples with a power of $N_{n}$.
Equation \eqref{aaa} can be expressed in an alternative form as
\begin{align}
y_{n}=\mathbf{g}_{n}^{T} \theta_{n} \mathbf{h} y_{R}+w_{n},
\end{align}
where $\mathbf{h}=[h_{1}h_{2}\ldots h_{\mathcal{N}}]^{T}$ and $\mathbf{g}_{n}=[g_{1,n}g_{2,n}\ldots g_{\mathcal{N},n}]^{T}$ are the channel coefficient vectors and the diagonal matrices for phase shifts are given by $\theta_{n} = \text{diag}([e^{j\theta _{1,n}}e^{j\theta _{2,n}}\ldots e^{j\theta_{\mathcal{N},n}}])$ which involves the phase shifts carried out by the RIS elements. The instantaneous SNRs at $\mathcal{D}$ and $\mathcal{\tilde{E}}$ can now be presented as
\begin{align}
\gamma_{n}=\left [\sum_{u=1}^{\mathcal{N}} \alpha _{u} \beta _{u,n}e^{j(\theta_{u,n}-\varphi_{u}-\varrho _{u,n})}\right ]^{2}\bar{\gamma}_{n},
\end{align} 
where $\bar\gamma_{n}=\frac{\mathcal{P}_{R}}{N_{n}}$ indicate the average SNR of $\mathcal{R}$-$\mathcal{I}$-$\mathcal{D}$ and $\mathcal{R}$-$\mathcal{I}$-$\mathcal{\tilde{E}}$ links. It is true that the RIS elements serve as intelligent reflecting surfaces, manipulating wireless signals to optimize reception quality while thwarting eavesdropping attempts by adjusting reflection coefficients, beamforming, power allocation, and phase shifts. Achievement of this dual optimization enhances both secure communication and overall system performance against malicious interception. Similar to \cite{chen2021impact, hossain2022physical}, considering the worst-case scenario (i.e., maximum eavesdropper's SNR), the optimal selection of $\theta _{u_{n}}$ that maximizes the instantaneous SNR is $\theta _{u_{n}}=\varphi _{u}+\varrho _{u_{n}}$. Hence, the maximized SNR can be found at $\mathcal{D}$ and $\mathcal{\tilde{E}}$ is
\begin{align}
\gamma_{n}=\left ( \sum_{u=1}^{\mathcal{N}}\alpha_{u}\beta_{u,n}   \right )^{2}\bar\gamma _{n}.
\end{align}
 
\subsection{SNR of Dual-hop Link}
The received SNR at $\mathcal{D}$ employing variable gain relay, is given by
\begin{subequations}
\begin{align}
\label{lah}
\gamma_{eq}&=\frac{\gamma_{R}\gamma_{\mathcal{D}}}{\gamma_{R}+\gamma_{\mathcal{D}}+1}
\\
\label{ldy}
&\cong \text{min} \left  \{ \gamma_{R},\gamma_{\mathcal{D}} \right \}.
\end{align}
\end{subequations}
\begin{remark}
The analytical modeling of the proposed system, incorporating SNR statistics as described in \eqref{lah}, is notably challenging due to mathematical intractability. Consequently, we resort to the approximation provided in \eqref{ldy}, which remains valid even in the context of DF relay, as testified in vast existing literature, such as \cite{ahmed2023enhancing}.
\end{remark}

%%%%%%%%%%%%%%%%%%%%%%%%%%%%%%%%%
\subsection{Statistics of UAV-aided RF NTN Links}

Since the UAV is hovering in the air, the distance between $\mathcal{S}$ to $\mathcal{R}$ and $\mathcal{S}$ to $\mathcal{E}$ can be modeled as a 3D random waypoint model (RWP) and the corresponding random distances are denoted by $q_{j}$. Following this 3D model, the {probability
distribution function (PDF) of $q_{j}$ is given by
\begin{align}
f_{q_{j}}(q)=\sum_{i_{j}=1}^{m}\frac{C_{i_{j}}}{D_{j}^{\beta _{i_{j}}+1}}q^{\beta_{i_{j}}}, \quad 0\leq q\leq D_{j}  
\end{align}
where $m$ = 3, $C_{i_{j}}=\frac{1}{72}[735,-1190,455]$ and $\beta _{i_{j}}= [2,4,6]$ \cite{1624340}. The $\mathcal{R}$ is presumed to be placed at the center of a sphere with a maximum radius of $D_{j}$ in the 3D network topology of the RWP model. 
\begin{remark}
It is noteworthy that the RWP is widely used for modeling UAV links due to its capability to capture the dynamic nature of UAV movement since the trajectories of UAVs are not predetermined. Furthermore, the RWP allows for the simulation of realistic UAV mobility patterns, which is essential for accurately assessing network dynamics in communication applications. Additionally, the parameters of RWP can be adjusted to mimic various UAV behaviors, proving it a versatile tool for UAV link modeling \cite{1624340}.   
\end{remark}

Since the $\mathcal{S}$-$\mathcal{R}$ and $\mathcal{S}$-$\mathcal{E}$ links experience $\kappa-\mu$ fading distribution, the PDF of $h_{jl}$ can be written as \cite{yacoub2007kappa}
\begin{align}
    f_{h_{jl}}(x)=A_{j}
    x^{\mu_{j}}e^{-B_{j}x^{2}}I_{\mu_{j} -1}[M_{j}x],
    \label{eq:eq5}
\end{align}
where $A_{j}=\frac{2\mu_{j} (1+\kappa_{j})^{\frac{\mu_{j} +1}{2}}}{\kappa_{j} ^{\frac{\mu_{j} -1}{2}}e^{\mu_{j} \kappa_{j}}}$, $B_{j}=\mu_{j} (1+\kappa_{j})$, $M_{j}=2\mu_{j} \sqrt{\kappa_{j} (1+\kappa_{j} )}$. Herein, $\kappa_{j}$ and $\mu_{j}$ are known as the fading parameters, and $I_{v}(.)$ is the first kind of modified Bessel function of order $v$ \cite{zwillinger2007table}. 

\begin{remark}
The $\kappa-\mu$ fading channel model serves as a valuable tool enabling accurate channel modeling and aiding in the optimization of communication systems for diverse and realistic wireless application scenarios, e.g., wireless sensor network, satellite communication, underwater acoustic networks, mmWave communication for 5G and beyond networks, and vehicular networks where signals encounter rapid variations due to vehicle mobility. Furthermore, based on the values of $\kappa$ and $\mu$, the $\kappa-\mu$ distribution exhibits enormous generality since it yields various multipath channels \cite{9615217} as special cases.
\end{remark}

\begin{lemma}
The PDF of $\gamma_{j}$ can be expressed as
\begin{align}
\label{eq:eq23}
f_{\gamma_{j}}(\gamma)=\sum_{i_{j}=1}^{m}\sum_{k_{j}=0}^{p}K_{1_{j}}\gamma^{-1}\text{G}_{1,2}^{1,1} \left[K_{2_{j}}\gamma \middle| 
\begin{matrix}1-\Psi _{i_{j}} \\
L_{j}\mu_{j}+k_{j},-\Psi _{i_{j}}
\end{matrix}\right],
\end{align}
where $K_{1_{j}}=\frac{1}{\alpha_{j}e^{L_{j}\mu_{j}\kappa_{j}}}C_{i_{j}}V_{j}(k_{j},p,L_{j}\mu _{j}-1)\left (L_{j}\mu_{j}\kappa_{j}  \right )^{k_{j}}$, $K_{2_{j}}=\frac{B_{j}D_{j}^{\alpha_{j}}}{\varrho_{j}L_{j}\bar{\gamma_{j}}}$, and $\Psi_{i_{j}}=\frac{1}{\alpha_{j}}\left ( \beta_{i_{j}}+1 \right )$.
\end{lemma}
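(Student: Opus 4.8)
The plan is to assemble $f_{\gamma_j}(\gamma)$ in three stages: condition on the random source-to-node distance $q_j$ and obtain the density of the $L_j$-branch MRC output; linearise the modified Bessel function that then appears via a (truncated) ascending series, so the conditional density becomes a finite sum of Gamma-type kernels; and finally average over $q_j$ with the RWP density, recognising the surviving integral as a single Meijer-$G$ function.

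First I would fix $q_j=q$. Because the $L_j$ receive branches are i.i.d.\ with common $\varrho_j,\alpha_j,\bar\gamma_j$ and share the one UAV-to-node distance, \eqref{eq:eq2} gives $\gamma_j=\varrho_j\bar\gamma_j q^{-\alpha_j}\sum_{l=1}^{L_j}|h_{jl}|^2$, i.e.\ a scaled sum of $L_j$ squared $\kappa$--$\mu$ envelopes, each with the density \eqref{eq:eq5}. By the reproductive property of the $\kappa$--$\mu$ model in its cluster parameter --- equivalently, by raising the single-branch MGF to the $L_j$-th power and inverting --- this sum is again a $\kappa$--$\mu$ power variate, now with cluster parameter $L_j\mu_j$ and the same $\kappa_j$. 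Hence $f_{\gamma_j\mid q_j}(\gamma\mid q)$ is a $\kappa$--$\mu$ SNR density of order $L_j\mu_j$: it carries the normalisation $\propto e^{-L_j\mu_j\kappa_j}$, a power $\gamma^{(L_j\mu_j-1)/2}$, an exponential of rate proportional to $q^{\alpha_j}/(\varrho_j\bar\gamma_j)$, and a factor $I_{L_j\mu_j-1}$ whose argument is proportional to $\sqrt{\gamma\,q^{\alpha_j}}$.

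Next I would replace $I_{L_j\mu_j-1}(\cdot)$ by its ascending power series truncated after $p+1$ terms; collecting the factorial/Gamma constants of the $k_j$-th term into $V_j(k_j,p,L_j\mu_j-1)$ turns the conditional density into $\sum_{k_j=0}^{p}(\text{const})\,\gamma^{L_j\mu_j+k_j-1}\,q^{\alpha_j(L_j\mu_j+k_j)}\exp\!\left(-\frac{B_j}{\varrho_j\bar\gamma_j}\gamma q^{\alpha_j}\right)$, with the factor $(L_j\mu_j\kappa_j)^{k_j}$ emerging after $B_j=\mu_j(1+\kappa_j)$ cancels against the $(1+\kappa_j)$ and $\kappa_j$ powers produced by the series. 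Then I would decondition, $f_{\gamma_j}(\gamma)=\int_0^{D_j}f_{\gamma_j\mid q_j}(\gamma\mid q)\,f_{q_j}(q)\,dq$, insert the monomial pieces $\frac{C_{i_j}}{D_j^{\beta_{i_j}+1}}q^{\beta_{i_j}}$ of the RWP density, and substitute $u=q^{\alpha_j}$. The $q$-integral then collapses to $\int_0^{D_j^{\alpha_j}}u^{\Psi_{i_j}+L_j\mu_j+k_j-1}e^{-(\text{const})\gamma u}\,du$ with $\Psi_{i_j}=(\beta_{i_j}+1)/\alpha_j$, i.e.\ a lower incomplete Gamma function of $\gamma$. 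Expressing that through its single-term Meijer-$G$ representation and using the standard power-shift identity to absorb the residual factor $\gamma^{-\Psi_{i_j}}$ --- which moves the upper parameter to $1-\Psi_{i_j}$ and adjoins $-\Psi_{i_j}$ to the lower row alongside $L_j\mu_j+k_j$ --- and folding every remaining constant into $K_{1_j}$ and $K_{2_j}$, delivers \eqref{eq:eq23}.

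The hard part will be the distance average in the last stage: since $q_j$ is supported on the bounded interval $[0,D_j]$, the integral yields an \emph{incomplete} rather than a complete Gamma function, so the result cannot reduce to elementary functions and must be kept in Meijer-$G$ form; getting the $G$-parameters right through the power-shift identity and tracking every multiplicative constant cleanly across the Bessel-series step and the $u=q^{\alpha_j}$ substitution is where the bookkeeping bites. A secondary technicality is justifying the term-by-term integration of the truncated series and matching $V_j(\cdot)$ to the definition introduced earlier in the paper.
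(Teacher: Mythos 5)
Your proposal is correct and follows essentially the same route as the paper's Appendix A: truncating the modified Bessel function with the $V_j(k_j,p,\cdot)$ series, reducing the bounded distance average (via $u=q^{\alpha_j}$, equivalently the paper's $u=y^{-1}$) to a lower incomplete Gamma function, and then invoking its Meijer-$G$ representation with the power-shift identity to reach \eqref{eq:eq23}. The only cosmetic difference is ordering: you invoke the $\kappa$--$\mu$ reproductive property for the $L_j$-branch MRC sum up front (conditioned on $q_j$), whereas the paper derives the single-branch result and applies the substitution $\mu_j\to L_j\mu_j$, $\bar\gamma_j\to L_j\bar\gamma_j$ at the end, which yields the same expression.
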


\begin{proof}
See Appendix \ref{A1}.
\end{proof}

\begin{lemma}
The CDF of $\gamma_{j}$ can be expressed as
\begin{align}
F_{\gamma_{j}}(\gamma)=\sum_{i_{j}=1}^{m} \sum_{k_{j}=0}^{p}K_{1_{j}}\text{G}_{2,3}^{1,2} \left[K_{2_{j}}\gamma  \middle| 
\begin{matrix}1-\Psi _{i_{j}},1 \\
L_{j}\mu_{j}+k_{j},0,-\Psi _{i_{j}}
\end{matrix}\right].
\label{eq:eq24}
\end{align}
\end{lemma}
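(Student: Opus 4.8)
The plan is to obtain the CDF by integrating the density supplied by the previous lemma, using $F_{\gamma_j}(\gamma) = \int_0^{\gamma} f_{\gamma_j}(t)\,\mathrm{d}t$. Substituting \eqref{eq:eq23} and interchanging the finite double sum $\sum_{i_j=1}^{m}\sum_{k_j=0}^{p}$ with the integral (legitimate since the sum has finitely many terms), the task reduces, for each index pair, to evaluating the single building-block integral (with the constant $K_{1_j}$ carried outside)
\begin{align}
\mathcal{J} = \int_0^{\gamma} t^{-1}\,\text{G}_{1,2}^{1,1}\!\left[K_{2_j}t \,\middle|\, \begin{matrix} 1-\Psi_{i_j} \\ L_j\mu_j+k_j,\ -\Psi_{i_j}\end{matrix}\right]\mathrm{d}t.
\end{align}

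The second step is to recognise $\mathcal{J}$ as the $\sigma=0$ instance of the standard indefinite Meijer-$G$ integral $\int_0^x t^{\sigma-1}\,\text{G}_{p,q}^{m,n}[\omega t\,|\,(a_p);(b_q)]\,\mathrm{d}t = x^{\sigma}\,\text{G}_{p+1,q+1}^{m,n+1}[\omega x\,|\,1-\sigma,(a_p);(b_q),-\sigma]$ \cite{zwillinger2007table}. With $\sigma=0$, this appends the upper parameter $1-\sigma=1$ and the lower parameter $-\sigma=0$, raises the orders from $(p,q;m,n)=(1,2;1,1)$ to $(2,3;1,2)$, and leaves a prefactor $x^{\sigma}=\gamma^{0}=1$ that simply disappears, yielding
\begin{align}
\mathcal{J} = \text{G}_{2,3}^{1,2}\!\left[K_{2_j}\gamma \,\middle|\, \begin{matrix} 1,\ 1-\Psi_{i_j} \\ L_j\mu_j+k_j,\ -\Psi_{i_j},\ 0\end{matrix}\right].
\end{align}

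Finally, I would invoke the permutation invariance of the Meijer $G$-function within each parameter block — the two numerator parameters (both lying in the $n=2$ block) may be interchanged, and likewise the trailing pair $\{-\Psi_{i_j},0\}$ of denominator parameters (both lying outside the $m=1$ block) — to re-cast $\mathcal{J}$ in exactly the ordering displayed in \eqref{eq:eq24}. Reinstating $K_{1_j}$ and the double summation then produces the claimed expression.

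The one step that warrants genuine care, as opposed to the otherwise mechanical manipulation of tabulated $G$-function identities, is verifying that the integral formula is legitimately applicable at $\sigma=0$, i.e., that $t^{-1}\text{G}_{1,2}^{1,1}[K_{2_j}t\,|\,\cdot]$ is integrable as $t\to 0^{+}$. Since the near-origin order of this $G$-function is governed by the denominator block (here just $L_j\mu_j+k_j$), the integrand behaves like $O(t^{L_j\mu_j+k_j-1})$, which is integrable provided $L_j\mu_j+k_j>0$ — satisfied under the physical parameter ranges ($\mu_j,L_j\ge 1$, $\kappa_j>0$, $\alpha_j>0$). I would record this condition explicitly before applying the identity; the remaining manipulations are routine.
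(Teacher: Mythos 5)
Your proposal is correct and follows essentially the same route as the paper: the paper's proof also integrates the Lemma 1 PDF over $[0,\gamma]$ term by term and applies the standard incomplete Meijer-$G$ integration identity (cited there as Prudnikov Eq.~(2.24.2.2)), which with exponent zero raises $\text{G}_{1,2}^{1,1}$ to $\text{G}_{2,3}^{1,2}$ with the added parameters $1$ and $0$; your reordering via within-block permutation symmetry and your integrability remark ($L_j\mu_j+k_j>0$) are consistent with, and slightly more explicit than, the paper's one-line argument.
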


\begin{proof}
See Appendix \ref{A2}.
\end{proof}

\begin{lemma}
The asymptotic CDF of $\gamma_{j}$ can be expressed as
\begin{align}
\nonumber
F_{\gamma_{j}}^{\infty}(\gamma)&=\sum_{i_{j}=1}^{m}\sum_{k_{j}=0}^{p}\frac{K_{1_{j}}}{(K_{2_{j}}\gamma)^{-L_{j}\mu_{j}-k_{j}}}\\
&\times\frac{\prod_{l_{1}=1}^{2}\Gamma(T_{1,l_{1}}+L_{j}\mu_{j}+k_{j})}{\prod_{l_{1}=2}^{3}\Gamma(T_{2,l_{1}}+L_{j}\mu_{j}+k_{j})},
\label{eq:eq18n}
\end{align}
where $T_{1}=(\Psi _{i_{j}},0)$, $T_{2}=(1,1+\Psi _{i_{j}})$, and the $y^{th}$ term of $T_{x}$ can be represented as $T_{x,y}$.
\end{lemma}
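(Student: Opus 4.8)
The plan is to obtain the asymptotic CDF from the exact expression in Lemma 2 by extracting the leading-order behavior of the Meijer $G$-function as its argument tends to zero, which corresponds to the high-SNR regime where $K_{2_j}\gamma \to 0$. Starting point is \eqref{eq:eq24}, i.e. $F_{\gamma_j}(\gamma)=\sum_{i_j}\sum_{k_j}K_{1_j}\,\mathrm{G}_{2,3}^{1,2}\!\left[K_{2_j}\gamma\,\middle|\,{}^{1-\Psi_{i_j},\,1}_{\ L_j\mu_j+k_j,\,0,\,-\Psi_{i_j}}\right]$. The standard tool is the series representation of the Meijer $G$-function obtained by summing the residues of its Mellin–Barnes integrand at the poles of the gamma factors $\prod\Gamma(b_\ell - s)$ associated with the lower parameters (the first $n$ of them, here only $b_1=L_j\mu_j+k_j$ since the $1,1$-type contour picks up that single sequence). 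The residue at the pole nearest the contour (here $s=b_1=L_j\mu_j+k_j$) gives the dominant term, scaling like $(K_{2_j}\gamma)^{L_j\mu_j+k_j}$, with coefficient equal to the ratio of gamma functions evaluated from the remaining numerator and denominator parameters shifted by $b_1$.

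Concretely, I would invoke the well-known expansion (e.g. \cite{zwillinger2007table} or the Meijer-$G$ residue formula)
\begin{align}
\nonumber
\mathrm{G}_{p,q}^{m,n}\!\left[z\,\middle|\,{}^{a_1,\dots,a_p}_{b_1,\dots,b_q}\right]
&\approx \sum_{h=1}^{m} z^{b_h}\,\frac{\prod_{\ell=1,\ell\neq h}^{m}\Gamma(b_\ell-b_h)\,\prod_{\ell=1}^{n}\Gamma(1-a_\ell+b_h)}{\prod_{\ell=m+1}^{q}\Gamma(1-b_\ell+b_h)\,\prod_{\ell=n+1}^{p}\Gamma(a_\ell-b_h)},
\end{align}
valid as $z\to 0$ when the $b_h$ are distinct, and keep only the single term that dominates. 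With $m=1$, $n=2$, $p=2$, $q=3$, $h=1$, $b_1=L_j\mu_j+k_j$, the other lower parameters $b_2=0$, $b_3=-\Psi_{i_j}$, and the upper parameters $a_1=1-\Psi_{i_j}$, $a_2=1$, the numerator contributes $\prod_{\ell=1}^{2}\Gamma(1-a_\ell+b_1)=\Gamma(\Psi_{i_j}+L_j\mu_j+k_j)\,\Gamma(L_j\mu_j+k_j)$ and the denominator contributes $\prod_{\ell=2}^{3}\Gamma(1-b_\ell+b_1)=\Gamma(1+L_j\mu_j+k_j)\,\Gamma(1+\Psi_{i_j}+L_j\mu_j+k_j)$. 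Recognizing $1-a_1=\Psi_{i_j}$, $1-a_2=0$, $1-b_2=1$, $1-b_3=1+\Psi_{i_j}$ lets me package these as the vectors $T_1=(\Psi_{i_j},0)$ and $T_2=(1,1+\Psi_{i_j})$, so that the asymptotic term reads $K_{1_j}(K_{2_j}\gamma)^{L_j\mu_j+k_j}\prod_{l_1=1}^{2}\Gamma(T_{1,l_1}+L_j\mu_j+k_j)\big/\prod_{l_1=2}^{3}\Gamma(T_{2,l_1}+L_j\mu_j+k_j)$; summing over $i_j$ and $k_j$ reproduces exactly \eqref{eq:eq18n}. (The index range $l_1=2$ to $3$ on $T_2$ is just a notational convention for picking out its two entries.)

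The main obstacle — and the point needing care — is justifying that retaining only the single residue at $s=b_1=L_j\mu_j+k_j$ is legitimate, i.e. that this is genuinely the \emph{lowest} power of $\gamma$ among all contributing residue sequences. One must check that the poles at $b_2=0$ and $b_3=-\Psi_{i_j}$ do not contribute to the $1,2$-type $G$-function (they lie on the ``wrong'' side for an $m=1$ contour, so only $b_1$ generates the descending-residue series), and that within the $b_1$-series the first residue dominates because subsequent terms carry extra nonnegative integer powers of $K_{2_j}\gamma$. Since $L_j\mu_j+k_j>0$ and $\Psi_{i_j}>0$ under the stated parameter assumptions, the exponent is positive and the term vanishes as $\gamma\to 0$ as expected for a CDF; one should also note the implicit genericity assumption that $L_j\mu_j+k_j$ is not an integer difference away from another $b_\ell$ producing a logarithmic (confluent) case, which is the standard caveat for such closed-form asymptotics. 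Once these points are acknowledged, the derivation is a direct substitution and the result follows.
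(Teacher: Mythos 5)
Your proposal is correct and follows essentially the same route as the paper: the paper's Appendix C likewise takes the exact CDF of Lemma 2 and applies the standard small-argument (high-SNR) residue expansion of the Meijer $G$-function (cited there as \cite[Eq.~(41)]{ansari2015performance}, after a minor parameter-shift identity from \cite{prudnikov1988integrals}), which with $m=1$ yields exactly the single term you compute with $T_{1}=(\Psi_{i_{j}},0)$ and $T_{2}=(1,1+\Psi_{i_{j}})$. Your additional remarks on why only the $b_{1}=L_{j}\mu_{j}+k_{j}$ pole sequence contributes and on the non-confluence caveat are sound and simply make explicit what the paper leaves implicit.
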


\begin{proof}
See Appendix \ref{A3}.
\end{proof}

\begin{figure*}[!b]
\hrulefill
\setcounter{equation}{13}
\begin{subequations}
\begin{align}
\aleph_{1}&=\frac{(w_{I}w_{n}\xi_{I}^{2}\xi_{n}^{2})(J_{I}J_{n}\nu_{I}\nu_{n})^{p}\left \{ \Gamma (1+p) \right \}^{2}}{\left ( \xi _{n}^{2}+p \right )\left ( \xi _{I}^{2}+p \right )}.
\label{eqn:eq19n}
\\
\aleph_{2}&=\frac{(1-w_{n})w_{I}\xi_{I}^{2}\xi_{n} ^{2}(\nu_{I}b_{n}c_{n}J_{I}J_{n})^{c_{n}p}\Gamma (a_{n}+p)\Gamma (1-\Delta\left ( c_{n},0 \right )+p)\Gamma (1-\Delta\left ( c_{n},1-\xi_{I}^{2} \right )+p)}{c_{n}^{\frac{1}{2}}(2\pi)^{\frac{1}{2}(c_{n}-1)}\left ( \frac{\xi _{n}^{2}}{c_{n}}+p \right )\Gamma(1-\Delta \left (c_{n},-\xi_{I}^{2}\right)+p)}.
\label{eqn:eq20n}
\\
\aleph_{3}&=\frac{(1-w_{I})w_{n}\xi_{I} ^{2}\xi_{n} ^{2}(\nu_{n}b_{I}c_{I}J_{n}J_{I})^{c_{I}p}\Gamma (a_{I}+p)\Gamma (\Delta\left ( c_{I},1 \right )+p)\Gamma (\Delta\left ( c_{I},\xi_{n}^{2} \right )+p)}{c_{I}^{\frac{1}{2}}(2\pi)^{\frac{1}{2}(c_{I}-1)}\left ( \frac{\xi _{I}^{2}}{c_{I}}+p \right )\Gamma(a_{I})\Gamma (\Delta\left ( c_{I},1+\xi_{n}^{2} \right )+p)}.
\label{eqn:eq21n}
\\
\aleph_{4}&=\frac{(1-w_{I})(1-w_{n})\xi_{I}^{2}\xi_{n}^{2}(b_{I}J_{I})^{c_{I}p}(b_{n}J_{n})^{c_{n}p}\Gamma (a_{n}+p)\Gamma (a_{I}+p)}{\left ( \frac{\xi _{n}^{2}}{c_{n}}+p \right )\left ( \frac{\xi _{I}^{2}}{c_{I}}+p \right )\Gamma(a_{I})\Gamma(a_{n})}.
\label{eqn:eq22n}
\end{align}
\end{subequations}
\end{figure*}

%%%%%%%%%%%%%%%%%%%%%%%%%%%%%%%
\subsection{Statistics of RIS-aided UOWC Links}
The optical beam transmitted from $\mathcal{R}$ reaches $\mathcal{D}$ after reflection from $\mathcal{I}$. The UWT due to air bubbles,  temperature gradients, and water salinity (fresh and salty water) causes irradiance fluctuations which can be modeled effectively with mEGG distribution \cite{zedini2019unified}. Besides, as $\mathcal{I}$ is placed on an underwater drone in water, due to the waves in the water, $\mathcal{I}$ moves slightly in the water resulting in a pointing error that can be modeled using \cite{farid2007outage}. We assume $\alpha_{u}$ and $\beta_{u,n}$ are mEGG distributed with pointing error impairments. Hence, the resultant RV associated with an RIS-aided $\mathcal{R}-\mathcal{I}-\mathcal{D}$ link for $u$th element of $\mathcal{I}$ is obtained as $Y_{u,n}=\alpha_{u}\beta_{u,n}$. Due to $\mathcal{N}$ reflecting elements, we obtain `$\mathcal{N}$' number of RVs as $\chi_{n}=\sum_{u=1}^{\mathcal{N}}Y_{u,n}$, which can be easily approximated by the gamma distribution that results from expanding the first term of a Laguerre series. The corresponding shape and scale parameters are given by $\rho_{n}=\frac{(\mathbb{E}(\chi_{n}))^{2}}{\text{Var}(\chi_{n})}$, and $w_{n}=\frac{\text{Var}(\chi_{n})}{\mathbb{E}(\chi_{n})}$, respectively. Since the expectation operator is a linear one, $\mathbb{E}(\chi_{n})=\mathcal{N}\mathbb{E}(Y_{u,n})$, 
$\text{Var}(\chi_{n})=\mathcal{N}\text{Var}(Y_{u,n})$. Here, $p$th moment of $Y_{u,n}$ is given by
\setcounter{equation}{12}
\begin{align}
    \mathbb{E}(Y_{u,n}^{p})=\aleph_{1}+\aleph_{2}+\aleph_{3}+\aleph_{4},
    \label{eq:eq19n}
\end{align}
where $\aleph_{1}$, $\aleph_{2}$, $\aleph_{3}$, and $\aleph_{4}$ are given in \eqref{eqn:eq19n}, \eqref{eqn:eq20n}, \eqref{eqn:eq21n}, and \eqref{eqn:eq22n}, respectively. Finally, using the RV transform, the PDF of $\gamma_{n}$ is expressed as follows \cite{rakib2024ris}. 
\setcounter{equation}{14}
\begin{align}
f_{\gamma_{n}}(\gamma)=\Upsilon_{1_{n}}\gamma ^{\frac{\rho_{n}}{r_{n}}-1}\text{G}_{0,1}^{1,0} \left[\Upsilon_{2_{n}}\gamma^{\frac{1}{r_{n}}}  \middle| 
\begin{matrix} 
- 
\\
0
\end{matrix}\right],
\label{eq:eq25}
\end{align}
where 
$\Upsilon_{1_{n}}= \frac{(\mathbb{E}(Y_{u,n}))^{\rho_{n}}}{r_{n}\Gamma(\rho _{n})w_{n}^{\rho _{n}}\tau_{r_{n},n}^{\frac{\rho _{n}}{r_{n}}}}$, 
$\Upsilon_{2_{n}}=\frac{\mathbb{E}(Y_{u,n})}{w_{n}\tau_{r_{n},n}^{\frac{1}{r_{n}}}}$,  
$r_{n}$ designates the detection method (i.e. $r_{n}$ = 1 implies the heterodyne detection (HD) method and $r_{n}$ = 2 represents the intensity modulation/direct detection (IM/DD) method), $\tau_{r_{n},n}=\frac{(\eta \mathbb{E}(Y_{u,n}))^{r_{n}}}{N_{n}}$ symbolizes electrical SNRs, $\xi_{I}$ and $\xi_{n}$ denote the pointing errors and $J_{I}$ and $J_{n}$ indicate the pointing loss. For $\mathcal{R}-\mathcal{I}$ link, $\nu_{I}$ stands for the exponential distribution parameter, $0< w_{I}< 1$ indicates the mixture weight, and the GG distribution parameters are specified by $a_{I}$, $b_{I}$ and $c_{I}$, respectively whereas for $\mathcal{I}-\mathcal{D}$ and $\mathcal{I}-\mathcal{\tilde{E}}$ links, those parameters are defined as $\nu_{n}$, $w_{n}$, $a_{n}$, $b_{n}$ and $c_{n}$. The values of those parameters are considered from \cite{zedini2019unified} by taking into account different air bubble levels, and temperature gradients in fresh and salty water that induce diverse turbulence situations. In consequence, the CDF of $\gamma_{n}$ can be presented as
\begin{align}
       F_{\gamma_{n}}(\gamma)=\Upsilon_{3_{n}}\gamma ^{\frac{\rho_{n}}{r_{n}}}\text{G}_{1,r_{n}+1}^{r_{n},1} \left[\Upsilon_{4_{n}}\gamma \middle| \begin{matrix} 1-\frac{\rho_{n}}{r_{n}} \\
                                        0,\Upsilon_{5_{n}},-\frac{\rho_{n}}{r_{n}}
                          \end{matrix}\right],
                          \label{eq:eq27}
\end{align} where $\Upsilon_{3_{n}}=\frac{(\mathbb{E}(Y_{u,n}))^{\rho _{n}}}{\sqrt{r_{n}}\Gamma(\rho_{n})w_{n}^{\rho_{n}}\tau _{r_{n},n}^{\frac{\rho _{n}}{r_{n}}}(2\pi)^{\frac{r_{n}-1}{2}}}$, $\Upsilon_{4_{n}}=\frac{(\mathbb{E}(Y_{u,n}))^{r_{n}}}{r^{r_{n}}w_{n}^{r_{n}}\tau_{r_{n},n}}$ and $\Upsilon_{5_{n}}=\frac{r_{n}-1}{r_{n}}$. Similar to \eqref{eq:eq18n}, $F_{\gamma_{n}}(\gamma)$ can be asymptotically expressed as
\begin{align}
\nonumber
F^{\infty}_{\gamma_{n}}(\gamma)&=\Upsilon_{3_{n}}\gamma ^{\frac{\rho_{n}}{r_{n}}}\sum_{k_{2}=1}^{r_{n}}\left ( \frac{1}{\Upsilon_{4_{n}}\gamma} \right )^{T_{3,k_{2}}-1}
\\
&\times \frac{\prod_{l_{2}=1;l_{2}\neq k_{2}}^{r_{n}}\Gamma(T_{3,k_{2}}-T_{3,l_{2}})\Gamma(1+\frac{\rho_{n}}{r_{n}}-T_{3,k_{2}})}{\prod_{l_{2}=r_{n}+1}^{r_{n}+1}\Gamma(1+T_{3,l_{2}}-T_{3,k_{2}})},
\label{eq:eq26n}
\end{align}
where $T_{3}=(1,1-\Upsilon_{5_{n}},1+\frac{\rho_{n}}{r_{n}} )$.

\begin{remark}
The unified mEGG distribution emerges as the optimal probability distribution for characterizing fluctuations in underwater optical signal irradiance, attributed to different levels of air bubbles, temperature, and salinity gradients. This distribution not only demonstrates exceptional agreement with measured data across varying turbulence conditions but also offers a more manageable model, facilitating the derivation of straightforward expressions for various performance metrics in UOWC systems. Furthermore, in the case of uniform temperature, the intensity of the incident laser beam can be described
by the exponential-Gamma (EG) distribution which is a special case of the mEGG model \cite{zedini2019unified}.
\end{remark}

\subsection{Statistics of Dual-hop Link}
The expression for the CDF of end-to-end SNR $\gamma_{eq}$ of dual-hop RF NTN-UOWC system is given by
\begin{align}
    F_{\gamma_{eq}}(\gamma )=F_{\gamma_{R}}(\gamma )+F_{\gamma_{\mathcal{D}}}(\gamma)-F_{\gamma_{R}}(\gamma )F_{\gamma_{\mathcal{D}}}(\gamma).
    \label{eq:eq28}
\end{align} 
Replacing (\ref{eq:eq24}) and (\ref{eq:eq27}) into (\ref{eq:eq28}), the CDF of $\gamma_{eq}$ is derived as
\begin{align}
\nonumber
F_{\gamma_{eq}}(\gamma)&=\Upsilon_{3_{\mathcal{D}}} \gamma ^{\frac{\rho_{\mathcal{D}}}{r_{\mathcal{D}}}}
\text{G}_{1,r_{D}+1}^{r_{D},1} \left[\Upsilon_{4_{\mathcal{D}}}\gamma \middle| \begin{matrix} 1-\frac{\rho_{\mathcal{D}}}{r_{\mathcal{D}}} 
\\
0,\Upsilon_{5_{\mathcal{D}}},-\frac{\rho_{\mathcal{D}}}{r_{\mathcal{D}}}
\end{matrix}\right]
\\\nonumber
&+\sum_{i_{R}=1}^{m}\sum_{k_{R}=0}^{p}K_{1_{R}}\text{G}_{2,3}^{1,2} \left[K_{2_{R}}\gamma  \middle| \begin{matrix}1-\Psi _{i_{R}},1 
\\
L_{R}\mu_{R}+k_{R},0,-\Psi _{i_{R}}
\end{matrix}\right]
\\
&\times \biggl(1- \Upsilon_{3_{\mathcal{D}}}\gamma ^{\frac{\rho_{\mathcal{D}}}{r_{\mathcal{D}}}}
G_{1,r_{\mathcal{D}}+1}^{r_{\mathcal{D}},1} \left[\Upsilon_{4_{\mathcal{D}}}\gamma \middle| \begin{matrix} 1-\frac{\rho_{\mathcal{D}}}{r_{\mathcal{D}}} 
\\
0,\Upsilon_{5_{\mathcal{D}}},-\frac{\rho_{\mathcal{D}}}{r_{\mathcal{D}}}
\end{matrix}\right]\biggl).
\label{eq:eq29}
\end{align} 
$F_{\gamma_{eq}}(\gamma)$ can be written asymptotically as
\begin{align}
    F^{\infty}_{\gamma_{eq}}(\gamma)\approx F^{\infty}_{\gamma_{R}}(\gamma)+F^{\infty}_{\gamma_{\mathcal{D}}}(\gamma).
    \label{eq:eq48e}
\end{align}
Substituting \eqref{eq:eq18n} and \eqref{eq:eq26n} into \eqref{eq:eq48e}, the asymptotic equivalent CDF of the dual-hop link can be obtained as 
\begin{align}
\nonumber
    F^{\infty}_{\gamma_{eq}}(\gamma)&=\sum_{i_{R}=1}^{m}\sum_{k_{R}=0}^{p}\frac{K_{1_{R}}}{(K_{2_{R}}\gamma)^{-L_{R}\mu_{R}-k_{R}}}\\
    \nonumber
    &\times\frac{\prod_{l_{1}=1}^{2}\Gamma(T_{1,l_{1}}+L_{R}\mu_{R}+k_{R})}{\prod_{l_{1}=2}^{3}\Gamma(T_{2,l_{1}}+L_{R}\mu_{R}+k_{R})}
\\\nonumber
&+\Upsilon_{3_{\mathcal{D}}}\gamma ^{\frac{\rho_{\mathcal{D}}}{r_{\mathcal{D}}}}\sum_{k_{2}=1}^{r_{\mathcal{D}}}\left ( \frac{1}{\Upsilon_{4_{\mathcal{D}}}\gamma} \right )^{T_{3,k_{2}}-1}
\\
&\times \frac{\prod_{l_{2}=1;l_{2}\neq k_{2}}^{r_{\mathcal{D}}}\Gamma(T_{3,k_{2}}-T_{3,l_{2}})\Gamma(1+\frac{\rho_{\mathcal{D}}}{r_{\mathcal{D}}}-T_{3,k_{2}})}{\prod_{l_{2}=r_{\mathcal{D}}+1}^{r_{\mathcal{D}}+1}\Gamma(1+T_{3,l_{2}}-T_{3,k_{2}})}.
\label{eq:eq49e}
\end{align}

\section{Average Secrecy Capacity}

Mathematically, for Scenario-I, ASC is expressed as \cite [Eq.~(15)]{islam2020secrecy}
\begin{align}
    ASC=\int_{0}^{\infty}\frac{1}{1+\gamma}F_{\gamma_{\mathcal{E}}}(\gamma)\left [ 1-F_{\gamma_{eq}}(\gamma ) \right ]\text{d}\gamma.
    \label{eq:eq30}
\end{align}
Upon substituting \eqref{eq:eq24} and \eqref{eq:eq29} into \eqref{eq:eq30}, and then performing integration, ASC is expressed as
\begin{align}
\nonumber
ASC & =  \sum_{i_{\mathcal{E}}=1}^{m}\sum_{k_{\mathcal{E}}=0}^{p}K_{1_{\mathcal{E}}}\left( \mathfrak{R}_{1}-\sum_{i_{R}=1}^{m}\sum_{k_{R}=0}^{p}K_{1_{R}}\mathfrak{R}_{2} \right.
\\
& -\Upsilon_{3_{\mathcal{D}}}\mathfrak{R}_{3}+\left.\sum_{i_{R}=1}^{m}\sum_{k_{R}=0}^{p}K_{1_{R}}\Upsilon_{3_{\mathcal{D}}}\mathfrak{R}_{4} \right),
\label{eq:eq31}
\end{align}
where the four integral parts $\mathfrak{R}_{1}$, $\mathfrak{R}_{2}$, $\mathfrak{R}_{3}$ and $\mathfrak{R}_{4}$ are derived as follows.
\begin{remark}
The ASC expression in \eqref{eq:eq31} represents the average rate at which secure information can be transmitted over a communication channel. It is a measure of the system's overall ability to maintain security, considering variations in channel conditions. A higher average secrecy capacity indicates a more consistently secure communication system. 
\end{remark}

\subsubsection{Derivation of $\mathfrak{R}_{1}$} 
$\mathfrak{R}_{1}$ is expressed as
\begin{align}
\nonumber
    \mathfrak{R}_{1}=\int_{0}^{\infty}(1+\gamma )^{-1}\text{G}_{2,3}^{1,2} \left[K_{2_{\mathcal{E}}}\gamma  \middle| \begin{matrix}1-\Psi_{i_{\mathcal{E}}},1 \\
                                    L_{\mathcal{E}}\mu_{\mathcal{E}}+k_{\mathcal{E}},0,-\Psi_{i_{\mathcal{E}}}
                          \end{matrix}\right]\text{d}\gamma. 
                          \label{eq:eq32}
  \end{align} 
To convert $(1+\gamma)^{-1}$ into Meijer's $G$ functions and then to do integration, we use the identities from \cite [Eq.~(8.4.2.5)] {prudnikov1988integrals} and \cite [Eq.~(2.24.1.2)] {prudnikov1988integrals}. Finally, we obtain $\mathfrak{R}_{1}$ as the following.
\begin{align}
\nonumber
\mathfrak{R}_{1}&=\int_{0}^{\infty}\text{G}_{1,1}^{1,1} \left[\gamma  \middle| \begin{matrix}0 \\
                                        0
                          \end{matrix}\right]\text{G}_{2,3}^{1,2} \left[K_{2_{\mathcal{E}}}\gamma  \middle| \begin{matrix}1-\Psi_{i_{\mathcal{E}}},1 \\
                                        L_{\mathcal{E}}\mu_{\mathcal{E}}+k_{\mathcal{E}},0,-\Psi_{i_{\mathcal{E}}}
                          \end{matrix}\right]\text{d}\gamma \\
                          &=\text{G}_{3,4}^{2,3} \left[K_{2_{\mathcal{E}}} \middle| \begin{matrix}1,0 \\
                                        L_{\mathcal{E}}\mu_{\mathcal{E}}+k_{\mathcal{E}},0,0,-\Psi_{i_{\mathcal{E}}}
                          \end{matrix}\right].
\end{align}

\subsubsection{Derivation of $\mathfrak{R}_{2}$} 
$\mathfrak{R}_{2}$ is expressed as
\begin{align}
    \nonumber
      \mathfrak{R}_{2}&=\int_{0}^{\infty}(1+\gamma)^{-1}\text{G}_{2,3}^{1,2} \left[K_{2_{\mathcal{E}}}\gamma  \middle| \begin{matrix}1-\Psi_{i_{\mathcal{E}}},1 \\
                                        L_{\mathcal{E}}\mu_{\mathcal{E}}+k_{\mathcal{E}},0,-\Psi_{i_{\mathcal{E}}}
                          \end{matrix}\right]
\\\nonumber
                         & \times\text{G}_{2,3}^{1,2} \left[K_{2_{R}}\gamma  \middle| \begin{matrix}1-\Psi _{i_{R}},1 \\
                                        L_{R}\mu_{R}+k_{R},0,-\Psi _{i_{R}}
                          \end{matrix}\right]\text{d}\gamma.   
\end{align}
 Utilizing \cite [Eq.~(1.112.1)] {zwillinger2007table} and carrying out integration with the aid of \cite [Eq.~(2.24.1.1)] {prudnikov1988integrals}, $\mathfrak{R}_{2}$ is obtained as
\begin{align}
    \nonumber
        \mathfrak{R}_{2}&=\sum_{z=1}^{\infty}(-1)^{z-1}\int_{0}^{\infty}\gamma^{z-1}\text{G}_{2,3}^{1,2} \left[K_{2_\mathcal{E}}\gamma  \middle| \begin{matrix}1-\Psi_{i_{\mathcal{E}}},1 \\
                                        L_{\mathcal{E}}\mu_{\mathcal{E}}+k_{\mathcal{E}},0,-\Psi_{i_{\mathcal{E}}}
                          \end{matrix}\right]\\
                          \nonumber
                         & \times\text{G}_{2,3}^{1,2} \left[K_{2_{R}}\gamma  \middle| \begin{matrix}1-\Psi _{i_{R}},1 \\
                                        L_{R}\mu_{R}+k_{R},0,-\Psi_{i_{R}}
                          \end{matrix}\right]\text{d}\gamma\\
                          &=\sum_{z=1}^{\infty}\frac{(-1)^{z-1}}{K_{2_{\mathcal{E}}}^{z}}\text{G}_{5,5}^{3,3} \left[\frac{K_{2_{R}}}{K_{2_{\mathcal{E}}}} \middle| \begin{matrix}1-\Psi_{i_{R}},1,\Omega_{1},\Omega_{2},\Omega_{3} \\
                                        L_{R}\mu_{R}+k_{R},\Omega_{4},-z,0,-\Psi _{i_{R}}
                          \end{matrix}\right]             
    \end{align} where $\Omega_{1}=1-z-L_{\mathcal{E}}\mu_{\mathcal{E}}-k_{\mathcal{E}}$, $\Omega_{2}=1-z$, $\Omega_{3}=1-z+\Psi _{i_{\mathcal{E}}}$ and $\Omega_{4}=\Psi_{i_{\mathcal{E}}}-z$.

\subsubsection{Derivation of $\mathfrak{R}_{3}$} 

$\mathfrak{R}_{3}$ is expressed as
\begin{align}
\nonumber
         \mathfrak{R}_{3}&=\int_{0}^{\infty}(1+\gamma)^{-1}\gamma ^{\frac{\rho_{\mathcal{D}}}{r_{\mathcal{D}}}}
                          \text{G}_{2,3}^{1,2} \left[K_{2_{\mathcal{E}}}\gamma  \middle| \begin{matrix}1-\Psi_{i_{\mathcal{E}}},1 \\
                                        L_{\mathcal{E}}\mu_{\mathcal{E}}+k_{\mathcal{E}},0,-\Psi_{i_{\mathcal{E}}}
                          \end{matrix}\right]\\
                          &\times\text{G}_{1,r_{\mathcal{D}}+1}^{r_{\mathcal{D}},1} \left[\Upsilon_{4_{\mathcal{D}}}\gamma \middle| \begin{matrix} 1-\frac{\rho_{\mathcal{D}}}{r_{\mathcal{D}}} \\
                                        0,\Upsilon_{5_{\mathcal{D}}},-\frac{\rho_{\mathcal{D}}}{r_{\mathcal{D}}}
                          \end{matrix}\right]\text{d}\gamma.    
\end{align}
 Making use of \cite [Eq.~(1.112.1)] {zwillinger2007table} and implementing integration via \cite [Eq.~(2.24.1.1)] {prudnikov1988integrals}, $\mathfrak{R}_{2}$ is written as
\begin{align}
\nonumber
\mathfrak{R}_{3}&=\sum_{z=1}^{\infty}(-1)^{z-1}\int_{0}^{\infty}\gamma ^{\varpi -1}\text{G}_{2,3}^{1,2} \left[K_{2_{\mathcal{E}}}\gamma  \middle| \begin{matrix}1-\Psi_{i_{\mathcal{E}}},1 \\
L_{\mathcal{E}}\mu_{\mathcal{E}}+k_{\mathcal{E}},0,-\Psi_{i_{\mathcal{E}}}
\end{matrix}\right]\\\nonumber
&\times\text{G}_{1,r_{\mathcal{D}}+1}^{r_{\mathcal{D}},1} \left[\Upsilon_{4_{\mathcal{D}}}\gamma \middle| \begin{matrix} 1-\frac{\rho_{\mathcal{D}}}{r_{\mathcal{D}}} \\
0,\Upsilon_{5_{\mathcal{D}}},-\frac{\rho_{\mathcal{D}}}{r_{\mathcal{D}}}
\end{matrix}\right]\text{d}\gamma\\
&=\sum_{z=1}^{\infty}\frac{(-1)^{z-1}}{K_{2_{\mathcal{E}}}^{\varpi }}\text{G}_{4,r_{\mathcal{D}}+3}^{r_{\mathcal{D}}+2,2} \left[\frac{\Upsilon_{4_{\mathcal{D}}}}{K_{2_{\mathcal{E}}}} \middle| \begin{matrix}\Omega_{5} , \Omega_{6},\Omega_{7} ,\Omega _{8} \\
S_{1r_{\mathcal{D}}},\Omega_{9},-\varpi,  S_{2r_{\mathcal{D}}} 
\end{matrix}\right],
\end{align}
where $\varpi=\frac{\rho_{\mathcal{D}}}{r_{\mathcal{D}}}+z$, $\Omega_{5}=1-\frac{\rho_{\mathcal{D}}}{r_{\mathcal{D}}}$, $\Omega_{6}=1-\varpi -L_{\mathcal{E}}\mu_{\mathcal{E}}-k_{\mathcal{E}}$, $\Omega_{7}=1-\varpi$, $\Omega_{8}=1-\varpi +\Psi_{i_{\mathcal{E}}}$, $\Omega_{9}=\Psi_{i_{\mathcal{E}}}-\varpi$, $S_{11}=0$, $S_{12}=(0,\Upsilon_{5_{\mathcal{D}}})$, $S_{21}=\Upsilon_{5_{\mathcal{D}}}$ and $S_{22}=-\frac{\rho_{\mathcal{D}}}{r_{\mathcal{D}}}$. 

\subsubsection{Derivation of $\mathfrak{R}_{4}$} 
$\mathfrak{R}_{4}$ is expressed as
\begin{align}
    \nonumber
         \mathfrak{R}_{4}&=\int_{0}^{\infty}(1+\gamma)^{-1}\gamma ^{\frac{\rho_{\mathcal{D}}}{r_{\mathcal{D}}}}\text{G}_{2,3}^{1,2} \left[K_{2_{\mathcal{E}}}\gamma  \middle| \begin{matrix}1-\Psi_{i_{\mathcal{E}}},1 \\
                                        L_{\mathcal{E}}\mu_{\mathcal{E}}+k_{\mathcal{E}},0,-\Psi_{i_{\mathcal{E}}}
                          \end{matrix}\right]\\
                          \nonumber
                          &\times\text{G}_{1,r_{\mathcal{D}}+1}^{r_{\mathcal{D}},1} \left[\Upsilon_{4_{\mathcal{D}}}\gamma \middle| \begin{matrix} 1-\frac{\rho_{\mathcal{D}}}{r_{\mathcal{D}}} \\
                                        0,\Upsilon_{5_{\mathcal{D}}},-\frac{\rho_{\mathcal{D}}}{r_{\mathcal{D}}}
                          \end{matrix}\right]\\
                          &\times\text{G}_{2,3}^{1,2} \left[K_{2_{R}}\gamma  \middle| \begin{matrix}1-\Psi _{i_{R}},1 \\
                                        L_{R}\mu_{R}+k_{R},0,-\Psi _{i_{R}}
                          \end{matrix}\right]\text{d}\gamma. 
\end{align}
Applying \cite [Eq.~(1.112.1)] {zwillinger2007table} and implementing integration with the aid of \cite [Eq.~(07.34.21.0081.01)] {wolfram1999mathematica}, $\mathfrak{R}_{4}$ is expressed as
\begin{align}
\nonumber
\mathfrak{R}_{4}&=\sum_{z=1}^{\infty}(-1)^{z-1}
\int_{0}^{\infty}\gamma^{\varpi-1}\text{G}_{2,3}^{1,2} \left[K_{2_\mathcal{E}}\gamma  \middle| \begin{matrix}1-\Psi_{i_{\mathcal{E}}},1 \\
L_{\mathcal{E}}\mu_{\mathcal{E}}+k_{\mathcal{E}},0,-\Psi_{i_{\mathcal{E}}}
\end{matrix}\right]\\\nonumber
&\times\text{G}_{1,r_{\mathcal{D}}+1}^{r_{\mathcal{D}},1} \left[\Upsilon_{4_{\mathcal{D}}}\gamma \middle| \begin{matrix} 1-\frac{\rho_{\mathcal{D}}}{r_{\mathcal{D}}} \\
0,\Upsilon_{5_{\mathcal{D}}},-\frac{\rho_{\mathcal{D}}}{r_{\mathcal{D}}}
\end{matrix}\right]
\\\nonumber
&\times\text{G}_{2,3}^{1,2} \left[K_{2_{R}}\gamma  \middle| \begin{matrix}1-\Psi _{i_{R}},1 \\
L_{R}\mu_{R}+k_{R},0,-\Psi_{i_{R}}
\end{matrix}\right]\text{d}\gamma\\
&=\sum_{z=1}^{\infty}\frac{(-1)^{z-1}}{K_{2_{\mathcal{E}}}^{\varpi}}\times\\\nonumber
&\text{G}_{3,2\colon 1,r_{\mathcal{D}}+1\colon 2,3}^{2,1\colon r_{\mathcal{D}},1\colon 1,2} \left[\begin{matrix}\Omega _{6},\Omega_{7},\Omega_{8} \\
\Omega _{10},-\varpi
\end{matrix}  \middle| \begin{matrix}\Omega_{5}\\
S_{3r_{D}}
\end{matrix}\middle|\begin{matrix}1-\Psi _{i_{R}},1\\
L_{R}\mu_{R}+k_{R},0,-\Psi_{i_{R}}
\end{matrix}  \middle| \Omega_{11}
\right ],
\end{align}
where $\Omega_{10}=\Psi_{i_{\mathcal{E}}}-\varpi$, $\Omega_{11}=\left ( \frac{\Upsilon_{4_{\mathcal{D}}}}{K_{2_{\mathcal{E}}}},\frac{K_{2_{R}}}{K_{2_{\mathcal{E}}}} \right ), $ $S_{31}=(0,\Upsilon_{5_{\mathcal{D}}})$, $S_{32}=(0,\Upsilon_{5_{\mathcal{D}}},-\frac{\rho_{\mathcal{D}}}{r_{\mathcal{D}}})$, and $\text{G}_{-,-\colon -,-\colon -,-}^{-,-\colon -,-\colon -,-}\left[\cdot\mid\cdot\mid\cdot\mid\cdot\right]$ denotes the Bivariate Meiger's $G$-function \cite{chergui2016performance}.

\section{Secrecy Outage Probability}
Mathematically, the SOP can be defined as the probability that the instantaneous secrecy capacity drops below a certain threshold chosen to represent an acceptable level of security. In the following subsections, we derive the expression of SOP considering the three eavesdropping scenarios.

\subsection{Scenario-I}
The SOP is defined in the case of RF eavesdropping as
\begin{align}
\label{dkj}
\nonumber
   SOP^{I}&=\Pr(\gamma_{eq}\leq \phi+\phi\gamma_{\mathcal{E}}-1)
   \\
   &=\int_{0}^{\infty}F_{\gamma _{eq}}(\phi+\phi\gamma-1 )f_{\gamma _{\mathcal{E}}}(\gamma)d\gamma,
\end{align}
$\phi =2^{\mathcal{R}_{s}}$ and $\mathcal{R}_{s}>0$. Obtaining the exact expression for the SOP is challenging due to the involvement of shifts in Meiger's $G$ function. Consequently, researchers often resort to deriving lower bounds for SOP. Drawing on the formulation presented in \cite[Eq.~(21)]{sarker2020secrecy}, the lower bound of SOP can be expressed as
\begin{align}
\nonumber
    SOP^{I}\geq SOP_{L}^{I}&=\Pr\left \{ \gamma _{eq} \leq \phi \gamma _{\mathcal{E}}\right \}
    \\
    &=\int_{0}^{\infty}F_{\gamma _{eq}}(\phi \gamma )f_{\gamma _{\mathcal{E}}}(\gamma)\text{d}\gamma. 
    \label{eq:eq40}
\end{align}
\begin{remark}
It is noted that comparing \eqref{dkj} with \eqref{eq:eq40}, the tightness between the exact SOP and lower bound for SOP can easily be found which 
depends on $\phi=2^{\mathcal{R}_{s}}$ and $\gamma_{\mathcal{E}}$. It can be easily observed that a smaller $\mathcal{R}_{s}$ and/or a larger $\bar{\gamma}_{\mathcal{E}}$ is responsible for a tighter bound.
\end{remark}

Finally, SOP$^{I}_{L}$ is obtained after putting (\ref{eq:eq23}) and (\ref{eq:eq29}) into (\ref{eq:eq40}) and then performing integration as the following.
\begin{align}
\nonumber
SOP^{I}_{L}&=\sum_{i_{\mathcal{E}}=1}^{m}\sum_{k_{\mathcal{E}}=0}^{p}K_{1_{\mathcal{E}}}\biggl ( \sum_{i_{R}=1}^{m}\sum_{k_{R}=0}^{p}K_{1_{R}}\mathfrak{Z}_{1}
\\
&+{\Upsilon}^{'}_{3_{\mathcal{D}}}\mathfrak{Z}_{2}-\sum_{i_{R}=1}^{m}\sum_{k_{R}=0}^{p}K_{1_{R}}{\Upsilon}^{'}_{3_{\mathcal{D}}}\mathfrak{Z}_{3} \biggl ), 
     \label{eq:eq41}
\end{align}
where ${\Upsilon}^{'}_{3_{\mathcal{D}}}=\Upsilon_{3_{\mathcal{D}}}\phi^{\frac{\rho_{\mathcal{D}}}{r_{\mathcal{D}}}}$ and $\mathfrak{Z}_{1}$, $\mathfrak{Z}_{2}$ and $\mathfrak{Z}_{3}$ are three integral terms.
\begin{remark}
The derived expression in \eqref{eq:eq41} can be used to quantify the probability of a communication link failing to maintain the secrecy of the transmitted confidential information. It mainly assesses the likelihood that this secrecy is compromised or outed during the communication process. 
\end{remark}
The three integral parts $\mathfrak{Z}_{1}$, $\mathfrak{Z}_{2}$ and $\mathfrak{Z}_{3}$ are derived in the following manner.
\subsubsection*{Derivation of $\mathfrak{Z}_{1}$} 
$\mathfrak{Z}_{1}$ is derived as
\begin{align}
\nonumber
\mathfrak{Z}_{1}&=\int_{0}^{\infty}\gamma ^{-1}\text{G}_{2,3}^{1,2} \left[{K}'_{2_{R}}\gamma  \middle| \begin{matrix}1-\Psi_{i_{R}},1 \\L_{R}\mu_{R}+k_{R},0,-\Psi_{i_{R}}\end{matrix}\right]\\
&\times\text{G}_{1,2}^{1,1} \left[K_{2_{\mathcal{E}}}\gamma  \middle| \begin{matrix}1-\Psi_{i_{\mathcal{E}}} \\L_{\mathcal{E}}\mu_{\mathcal{E}}+k_{\mathcal{E}},-\Psi_{i_{\mathcal{E}}}
                          \end{matrix}\right]d\gamma. 
\end{align}
Performing the integration through \cite [Eq.~(2.24.1.1)] {prudnikov1988integrals}, $\mathfrak{Z}_{1}$ is expressed as
\begin{align}
     \mathfrak{Z}_{1}=\text{G}_{4,4}^{3,2} \left[\frac{K_{2_{\mathcal{E}}}}{{K}'_{2_{R}}}  \middle| \begin{matrix}1-\Psi_{i_{\mathcal{E}}},1-L_{R}\mu_{R}-k_{R},1,1+\Psi_{i_{R}} \\
                                        L_{\mathcal{E}}\mu_{\mathcal{E}}+k_{\mathcal{E}},\Psi_{i_{R}},0,-\Psi_{i_{\mathcal{E}}}
                          \end{matrix}\right],
\end{align}
where ${K}'_{2_{R}}=\phi K_{2_{R}}$.

\subsubsection*{Derivation of $\mathfrak{Z}_{2}$} $\mathfrak{Z}_{2}$ is expressed as
\begin{align}
\nonumber
\mathfrak{Z}_{2}&=\int_{0}^{\infty}\gamma^{\frac{\rho_{\mathcal{D}}}{r_{\mathcal{D}}}-1}\text{G}_{1,r_{\mathcal{D}}+1}^{r_{\mathcal{D}},1} \left[{\Upsilon}'_{4_{\mathcal{D}}}\gamma \middle| \begin{matrix} 1-\frac{\rho_{\mathcal{D}}}{r_{\mathcal{D}}} \\
                                        0,\Upsilon_{5_{\mathcal{D}}},-\frac{\rho_{\mathcal{D}}}{r_{\mathcal{D}}}
                          \end{matrix}\right]\\
                          &\times\text{G}_{1,2}^{1,1} \left[K_{2_{\mathcal{E}}}\gamma  \middle| \begin{matrix}1-\Psi_{i_{\mathcal{E}}} \\
                                        L_{\mathcal{E}}\mu_{\mathcal{E}}+k_{\mathcal{E}},-\Psi_{i_{\mathcal{E}}}
                          \end{matrix}\right]\text{d}\gamma.  
\end{align}
After implementing the integration identity of \cite [Eq.~(2.24.1.1)] {prudnikov1988integrals}, $\mathfrak{Z}_{2}$ is expressed as
\begin{align}
    \mathfrak{Z}_{2}=({\Upsilon}'_{4_{\mathcal{D}}})^{-\left ( \frac{\rho_{\mathcal{D}}}{r_{\mathcal{D}}}+1 \right )}\text{G}_{2+r_{\mathcal{D}},3}^{2,1+r_{\mathcal{D}}} \left[\frac{K_{2_{\mathcal{E}}}}{{\Upsilon}'_{4_{\mathcal{D}}}}  \middle| \begin{matrix}1-\Psi_{i_{\mathcal{E}}}, S_{4r_{\mathcal{D}}} \\
                                        L_{\mathcal{E}}\mu_{\mathcal{E}}+k_{\mathcal{E}},0,-\Psi_{i_{\mathcal{E}}}
                          \end{matrix}\right],
\end{align}
where ${\Upsilon}'_{4_{\mathcal{D}}}=\phi\Upsilon_{4_{\mathcal{D}}}$, $S_{41}=(1-\frac{\rho_{\mathcal{D}}}{r_{\mathcal{D}}},1-\frac{\rho_{\mathcal{D}}}{r_{\mathcal{D}}}-\Upsilon_{5_{\mathcal{D}}})$ and $S_{42}=(1-\frac{\rho_{\mathcal{D}}}{r_{\mathcal{D}}},1-\frac{\rho_{\mathcal{D}}}{r_{\mathcal{D}}}-\Upsilon_{5_{\mathcal{D}}},1)$.\\

\subsubsection*{Derivation of $\mathfrak{Z}_{3}$} $\mathfrak{Z}_{3}$ is expressed as
\begin{align}
\nonumber
\mathfrak{Z}_{3}&=\int_{0}^{\infty}\gamma^{\frac{\rho_{\mathcal{D}}}{r_{\mathcal{D}}}-1}\text{G}_{2,3}^{1,2} \left[{K}'_{2_{R}}\gamma  \middle| \begin{matrix}1-\Psi_{i_{R}},1 \\
L_{R}\mu_{R}+k_{R},0,-\Psi _{i_{R}}
\end{matrix}\right]
\\\nonumber
                          &\times\text{G}_{1,r_{\mathcal{D}}+1}^{r_{\mathcal{D}},1} \left[{\Upsilon }'_{4_{\mathcal{D}}}\gamma \middle| \begin{matrix} 1-\frac{\rho_{\mathcal{D}}}{r_{\mathcal{D}}} \\
                                        0,\Upsilon_{5_{\mathcal{D}}},-\frac{\rho_{\mathcal{D}}}{r_{\mathcal{D}}}
                          \end{matrix}\right]\\
                          &\times\text{G}_{1,2}^{1,1} \left[K_{2_{\mathcal{E}}}\gamma  \middle| \begin{matrix}1-\Psi_{i_{\mathcal{E}}} \\
                                        L_{\mathcal{E}}\mu_{\mathcal{E}}+k_{\mathcal{E}},-\Psi_{i_{\mathcal{E}}}
                          \end{matrix}\right]\text{d}\gamma.
\end{align}
$\mathfrak{Z}_{3}$ is represented as after integrating with the aid of \cite [Eq.~(07.34.21.0081.01)] {wolfram1999mathematica} as
\begin{align}
    \nonumber
         \mathfrak{Z}_{3}&=\left ( {K}'_{2_{R}} \right )^{-\frac{\rho_{\mathcal{D}}}{r_{\mathcal{D}}}}\times\\
         &\text{G}_{3,2\colon 1,r_{\mathcal{D}}+1\colon 1,2}^{2,1\colon r_{\mathcal{D}},1\colon 1,1} \left[\begin{matrix}\Omega _{12},\Omega_{13},\Omega_{14} \\
                                        \Omega _{15},-\frac{\rho_{\mathcal{D}}}{r_{\mathcal{D}}}
                          \end{matrix}  \middle| \begin{matrix}\Omega_{16}\\
                                        S_{3r_{\mathcal{D}}}
                          \end{matrix}\middle|\begin{matrix}1-\Psi_{i_{\mathcal{E}}},1\\
                                        L_{\mathcal{E}}\mu_{\mathcal{E}}+k_{\mathcal{E}},-\Psi_{i_{\mathcal{E}}}
 \end{matrix}  \middle|\Omega_{17}
\right ],
\end{align}
where $\Omega_{12}=1-\frac{\rho_{\mathcal{D}}}{r_{\mathcal{D}}}-L_{R}\mu_{R}-k_{R}$, $\Omega_{13}=1-\frac{\rho_{\mathcal{D}}}{r_{\mathcal{D}}}$, $\Omega_{14}=1-\frac{\rho_{\mathcal{D}}}{r_{\mathcal{D}}}+\Psi_{i_{R}}$, $\Omega_{15}=\Psi_{i_{R}}-\frac{\rho_{\mathcal{D}}}{r_{\mathcal{D}}}$, $\Omega_{16}=1-\frac{\rho_{\mathcal{D}}}{r_{\mathcal{D}}}$, and $\Omega_{17}=\left ( \frac{{\Upsilon}'_{4_{\mathcal{D}}}}{{K}'_{2_{R}}},\frac{K_{2_{\mathcal{E}}}}{{K}'_{2_{R}}} \right )$.

\subsubsection*{Asymptotic Analysis}
The asymptotic expression of $SOP^{I}_{L}$ can be obtained as
\begin{align}
SOP^{I}_{\infty } =\int_{0}^{\infty}F^{\infty }_{\gamma _{eq}}(\phi \gamma )f_{\gamma _{\mathcal{E}}}(\gamma)d\gamma.    \label{eq:eq48n}
\end{align}
Substituting \eqref{eq:eq23} and \eqref{eq:eq49e} into \eqref{eq:eq48n}, and performing integration through \cite [Eq.~(7.811.4)]  {zwillinger2007table}, $SOP^{I}_{\infty}$ is obtained as shown in \eqref{eq:eqasy2}.
\begin{figure*}
\setcounter{equation}{48}
    \begin{align}
        \nonumber
        SOP^{I}_{\infty }&=\sum_{i_{\mathcal{E}}=1}^{m}\sum_{k_{\mathcal{E}}=0}^{p}K_{1_{\mathcal{E}}}\left ( \sum_{i_{R}=1}^{m}\sum_{k_{R}=0}^{p}\frac{K_{1_{R}}}{(K_{2_{R}}\phi)^{-L_{R}\mu_{R}-k_{R}}}\frac{\prod_{l_{1}=1}^{2}\Gamma(T_{1,l_{1}}+L_{R}\mu_{R}+k_{R})}{\prod_{l_{1}=2}^{3}\Gamma(T_{2,l_{1}}+L_{R}\mu_{R}+k_{R})}\frac{\Gamma \left ( L_{R}\mu_{R}+k_{R}+L_{\mathcal{E}}\mu_{\mathcal{E}}+k_{\mathcal{E}}  \right )}{k_{2_{\mathcal{E}}}^{L_{R}\mu_{R}+k_{R}}\left ( \Psi_{i_{\mathcal{E}}}-L_{R}\mu_{R}-k_{R}  \right )}\right.\\
        \nonumber
        &+\left.\sum_{k_{2}=1}^{r_{\mathcal{D}}}\frac{\Upsilon_{3_{\mathcal{D}}}\phi ^{\frac{\rho_{\mathcal{D}}}{r_{\mathcal{D}}}}}{\Upsilon_{4_{\mathcal{D}}}^{T_{3,k_{2}}-1}}
          \frac{\prod_{l_{2}=1;l_{2}\neq k_{2}}^{r_{\mathcal{D}}}\Gamma(T_{3,k_{2}}-T_{3,l_{2}})\Gamma(1+\frac{\rho_{\mathcal{D}}}{r_{\mathcal{D}}}-T_{3,k_{2}})}{\prod_{l_{2}=r_{\mathcal{D}}+1}^{r_{\mathcal{D}}+1}\Gamma(1+T_{3,l_{2}}-T_{3,k_{2}})}\right.\\ 
         &\times\left.\frac{\Gamma \left ( L_{\mathcal{E}}\mu_{\mathcal{E}}+k_{\mathcal{E}}+\frac{\rho_{\mathcal{D}}}{r_{\mathcal{D}}}-T_{3,k_{2}}+1 \right )\Gamma \left ( \Psi_{i_{\mathcal{E}}} -\frac{\rho_{\mathcal{D}}}{r_{\mathcal{D}}}+T_{3,k_{2}}-1 \right )}{K_{2_{\mathcal{E}}}^{\frac{\rho_{\mathcal{D}}}{r_{\mathcal{D}}}-T_{3,k_{2}}+1}\Gamma \left ( \Psi_{i_{\mathcal{E}}} -\frac{\rho_{\mathcal{D}}}{r_{\mathcal{D}}}+T_{3,k_{2}} \right )} \right )  
        \label{eq:eqasy2}
    \end{align}
    \hrulefill
\end{figure*}

\begin{figure*}[!t]
\setcounter{equation}{51}
\begin{align}
\nonumber
     SOP_{L}^{II}&={\Upsilon}'_{3_{\mathcal{D}}}\Upsilon_{1_{\mathcal{\tilde{E}}}}\frac{r_{\mathcal{\tilde{E}}}^{\frac{1}{2}}({\Upsilon}'_{4_{\mathcal{D}}})^{-\zeta}}{(2\pi)^{\frac{1}{2}(r_{\mathcal{\tilde{E}}}-1)}}\text{G}_{r_{\mathcal{D}}+1,r_{\mathcal{\tilde{E}}}+1}^{r_{\mathcal{\tilde{E}}}+1,r_{\mathcal{D}}} \left[ \Omega_{18} \middle|\begin{matrix} S_{5r_{\mathcal{D}}}\\
                                       0,\Omega _{19}
                          \end{matrix}\right] \left ( 1-\sum_{i_{R}=1}^{m} \sum_{k_{R}=0}^{p}K_{1_{R}}\text{G}_{2,3}^{1,2} \left[\Omega_{20}  \middle| 
\begin{matrix}1-\Psi _{i_{R}},1 \\
L_{R}\mu_{R}+k_{R},0,-\Psi _{i_{R}}
\end{matrix}\right] \right )\\
& + \sum_{i_{R}=1}^{m} \sum_{k_{R}=0}^{p}K_{1_{R}}\text{G}_{2,3}^{1,2} \left[\Omega_{20}   \middle| 
\begin{matrix}1-\Psi _{i_{R}},1 \\
L_{R}\mu_{R}+k_{R},0,-\Psi _{i_{R}}
\end{matrix}\right].
\label{eq:eqasy222}
\end{align}
\hrulefill
\end{figure*}

\subsection{Scenario-II} 
Whenever the eavesdropper is present at the UOWC link, then SOP is represented as \cite[Eq.~(13)] {sarker2021intercept}
\setcounter{equation}{49}
\begin{align}
\nonumber
        SOP^{II}&=\int_{0}^{\infty} F_{\gamma_{\mathcal{D}}}(\phi \gamma +\phi -1)f_{\gamma_{\mathcal{\tilde{E}}}}(\gamma )d\gamma\\
&\times(1-F_{\gamma _{R}}(\phi -1))+F_{\gamma _{R}}(\phi -1). 
\label{eq:eq50n}
\end{align}
Since deriving the exact SOP is difficult, we define the SOP at the lower bound as
\begin{align}
\label{eq:eq51n}
\nonumber
        SOP^{II}\geq SOP_{L}^{II}
        &=\int_{0}^{\infty}F_{\gamma _{\mathcal{D}}}(\phi \gamma )f_{\gamma_{\mathcal{\tilde{E}}}}(\gamma)d\gamma\\ &\times(1-F_{\gamma _{R}}(\phi -1))+F_{\gamma _{R}}(\phi -1).  
\end{align}
Substituting \eqref{eq:eq25} and (\ref{eq:eq27}) into (\ref{eq:eq51n}) and utilizing \cite [Eq.~(2.24.1.1)] {prudnikov1988integrals}, the expression of $SOP_{L}^{II}$ is obtained as shown in \eqref{eq:eqasy222}, where ${\Upsilon}'_{3_{\mathcal{D}}}=\phi\Upsilon_{3_{\mathcal{D}}}$, $\zeta=\frac{\rho_{\mathcal{D}}}{r_{\mathcal{D}}}+\frac{\rho_{\mathcal{\tilde{E}}}}{r_{\mathcal{\tilde{E}}}}$, $\Omega_{18}=\frac{\Upsilon_{2_{\mathcal{\tilde{E}}}}^{r_{\mathcal{\tilde{E}}}}r_{\mathcal{\tilde{E}}}^{-r_{\mathcal{\tilde{E}}}}}{{\Upsilon}'_{4_{\mathcal{D}}}}$, $\Omega_{19}=-\frac{\rho_{\mathcal{\tilde{E}}}}{r_{\mathcal{\tilde{E}}}}$, $\Omega_{20}=K_{2_{R}}(\phi -1)$, $S_{51}=(1-\zeta ,1-\zeta -\Upsilon_{5_{\mathcal{D}}})$ and $S_{52}=(1-\zeta ,1-\zeta -\Upsilon _{5_{\mathcal{D}}},1-\frac{\rho_{\mathcal{D}}}{r_{\mathcal{D}}})$.
\begin{figure*}[t!]
\setcounter{equation}{52}
\begin{align}
\nonumber
    SOP^{II}_{\infty}&={\Upsilon}'_{3_{\mathcal{D}}}\Upsilon_{1_{\mathcal{\tilde{E}}}}\frac{r_{\mathcal{\tilde{E}}}^{\frac{1}{2}}({\Upsilon}'_{4_{\mathcal{D}}})^{-\zeta}}{2\pi^{\frac{1}{2}(r_{\mathcal{\tilde{E}}}-1)}}\sum_{k_{3}=1}^{r_{\mathcal{D}}} \Omega_{18}^{T_{4,k_{3}}-1}
     \frac{\prod_{l_{3}=1;l_{3}\neq k_{3}}^{r_{\mathcal{D}}}\Gamma \left ( T_{4,k_{3}}- T_{4,l_{3}}\right )\prod_{l_{3}=1}^{r_{\mathcal{\tilde{E}}}+1}\Gamma \left (1+ T_{5,l_{3}}- T_{4,k_{3}}\right )}{\prod_{l_{3}=r_{\mathcal{D}}+1}^{r_{\mathcal{D}}+1}\Gamma \left (1+ T_{4,l_{3}}- T_{4,k_{3}}\right )}\\
     \nonumber
     &\times\left ( 1-\sum_{i_{R}=1}^{m}\sum_{k_{R}=0}^{p}\frac{K_{1_{R}}}{\Omega_{20}^{-L_{R}\mu_{R}-k_{R}}}\frac{\prod_{l_{1}=1}^{2}\Gamma(T_{1,l_{1}}+L_{R}\mu_{R}+k_{R})}{\prod_{l_{1}=2}^{3}\Gamma(T_{2,l_{1}}+L_{R}\mu_{R}+k_{R})} \right )+\sum_{i_{R}=1}^{m}\sum_{k_{R}=0}^{p}\frac{K_{1_{R}}}{\Omega_{20}^{-L_{R}\mu_{R}-k_{R}}}\\
     &\times\frac{\prod_{l_{1}=1}^{2}\Gamma(T_{1,l_{1}}+L_{R}\mu_{R}+k_{R})}{\prod_{l_{1}=2}^{3}\Gamma(T_{2,l_{1}}+L_{R}\mu_{R}+k_{R})} .
\label{eq:eq53asy}    
\end{align}
\hrulefill
\end{figure*}

\subsubsection*{Asymptotic Analysis}
Upon utilizing the identity provided in \cite [Eq.~(41)] {ansari2015performance}, the asymptotic expression of $SOP_{L}^{II}$ can be obtained as shown in (\ref{eq:eq53asy}), where $T_{4}=S_{5r_{\mathcal{D}}}$ and $T_{5}=(0,\Omega_{14})$.

\subsection{Scenario-III} 
The lower bound of SOP for the RIS-aided hybrid RF-UOWC architecture with simultaneous eavesdropping attack over the RF and UOWC links is given by \cite{rahman2023ris}
\begin{align}
    SOP^{III}_{L}=1-(SOP_{1}\times SOP_{2}),
     \label{eq:eq51}
\end{align}
where
\begin{align}   
&SOP_{1}=1-\int_{0}^{\infty}F_{\gamma_{R}}(\phi\gamma)f_{\gamma_{\mathcal{E}}}(\gamma)d\gamma, \label{eq:eq52}\\
&SOP_{2}=1-\int_{0}^{\infty}F_{\gamma_{\mathcal{D}}}(\phi\gamma)f_{\gamma_{\mathcal{\tilde{E}}}}(\gamma)d\gamma. 
\label{eq:eq53}
\end{align}
Substituting (\ref{eq:eq23}) and (\ref{eq:eq24}) into (\ref{eq:eq52}) and exploiting \cite [Eq.~(2.24.1.1)] {prudnikov1988integrals}, $SOP_{1}$ is expressed as
\begin{align}
\nonumber
       SOP_{1}&=1-\sum_{i_{R}=1}^{m}\sum_{k_{R}=0}^{p}K_{1_{R}}\sum_{i_{\mathcal{E}}=1}^{m}\sum_{k_{\mathcal{E}}=0}^{p}K_{1_{\mathcal{E}}}\\
       &\times\text{G}_{4,4}^{3,2} \left[\frac{K_{2_{\mathcal{E}}}}{{K}'_{2_{R}}}  \middle| \begin{matrix}1-\Psi_{i_{\mathcal{E}}},1-L_{R}\mu_{R}-k_{R},1,1+\Psi_{i_{R}} \\
                                        L_{\mathcal{E}}\mu_{\mathcal{E}}+k_{\mathcal{E}},\Psi_{i_{R}},0,-\Psi_{i_{\mathcal{E}}}
                          \end{matrix}\right].
\end{align}
Putting (\ref{eq:eq25}) and (\ref{eq:eq27}) into (\ref{eq:eq53}) and completing the integration via \cite [Eq.~(2.24.1.1)] {prudnikov1988integrals}, $SOP_{2}$ is given by

\begin{align}
        SOP_{2}&=1-{\Upsilon}'_{3_{\mathcal{D}}}\Upsilon_{1_{\mathcal{\tilde{E}}}}\frac{r_{\mathcal{\tilde{E}}}^{\frac{1}{2}}({\Upsilon}'_{4_{\mathcal{D}}})^{-\zeta}}{(2\pi)^{\frac{1}{2}(r_{\mathcal{\tilde{E}}}-1)}}\text{G}_{r_{\mathcal{D}}+1,r_{\mathcal{\tilde{E}}}+1}^{r_{\mathcal{\tilde{E}}}+1,r_{\mathcal{D}}} \left[ \Omega_{18} \middle|\begin{matrix} S_{5r_{\mathcal{D}}}\\
                                       0,\Omega _{19}
                          \end{matrix}\right]  .
\end{align}

\subsubsection*{Asymptotic Analysis}
The asymptotic expression of $SOP^{III}_{L}$ can be written as
\begin{align}
    SOP^{\infty}_{III}=1-(SOP^{\infty}_{1}\times SOP^{\infty}_{2}).
\end{align}
Applying \cite [Eq.~(41)] {ansari2015performance}, $SOP^{\infty}_{1}$ can be expressed as shown in (\ref{eq:eqasy31}), and $SOP^{\infty}_{2}$ can be obtained as shown in (\ref{{eq:eqasy32}}), where $T_{6}=(1-\Psi_{i_{\mathcal{E}}},1-L_{R}\mu_{R}-k_{R})$ and $T_{7}=( L_{\mathcal{E}}\mu_{\mathcal{E}}+k_{\mathcal{E}},\Psi_{i_{\mathcal{E}}},0)$.

\begin{figure*}[!t]
\normalsize
\setcounter{mytempeqncnt}{\value{equation}}
\setcounter{equation}{58}
\begin{align}
    &SOP^{\infty}_{1}=1-\sum_{i_{R}=1}^{m}\sum_{k_{R}=0}^{p}K_{1_{R}}\sum_{i_{\mathcal{E}}=1}^{m}\sum_{k_{\mathcal{E}}=0}^{p}K_{1_{\mathcal{E}}}\sum_{k_{4}=1}^{2}\left ( \frac{K_{2_{\mathcal{E}}}}{{K}'_{2_{R}}} \right )^{T_{6,k_{4}}-1}
         \frac{\prod_{l_{4}=1;l_{4}\neq k_{4}}^{2}\Gamma \left ( T_{6,k_{4}}- T_{6,l_{4}}\right )\prod_{l_{4}=1}^{3}\Gamma \left (1+ T_{7,l_{4}}- T_{6,k_{4}}\right )}{\prod_{l_{4}=3}^{4}\Gamma \left (1+ T_{6,l_{4}}- T_{6,k_{4}}\right )\Gamma \left (T_{6,k_{4}}+\Psi_{i_{\mathcal{E}}}\right )}\label{eq:eqasy31}\\
    &SOP^{\infty}_{2}=1-{\Upsilon}'_{3_{\mathcal{D}}}\Upsilon_{1_{\mathcal{\tilde{E}}}}\frac{r_{\mathcal{\tilde{E}}}^{\frac{1}{2}}({\Upsilon}'_{4_{\mathcal{D}}})^{-\zeta}}{2\pi^{\frac{1}{2}(r_{\mathcal{\tilde{E}}}-1)}}\sum_{k_{3}=1}^{r_{\mathcal{D}}}\Omega_{18}^{T_{4,k_{3}}-1}
     \frac{\prod_{l_{3}=1;l_{3}\neq k_{3}}^{r_{\mathcal{D}}}\Gamma \left ( T_{4,k_{3}}- T_{4,l_{3}}\right )\prod_{l_{3}=1}^{r_{\mathcal{\tilde{E}}}+1}\Gamma \left (1+ T_{5,l_{3}}- T_{4,k_{3}}\right )}{\prod_{l_{3}=r_{\mathcal{D}}+1}^{r_{\mathcal{D}}+1}\Gamma \left (1+ T_{4,l_{3}}- T_{4,k_{3}}\right )}
     \label{{eq:eqasy32}}
\end{align}
\setcounter{equation}{\value{mytempeqncnt}}
\hrulefill
\vspace*{4pt}
\end{figure*}

\section{Strictly Positive Secrecy Capacity}

The probability of SPSC can be mathematically expressed in terms of the SOP expressions in (\ref{eq:eq41}), (\ref{eq:eqasy222}), and (\ref{eq:eq51}) as
\begin{subequations}
\begin{align}
\label{sf1}
    SPSC^{I}&=1-SOP^{I}\big|_{\mathcal{R}_{s}=0},
\\\label{sf2}
    SPSC^{II}&=1-SOP^{II}\big|_{\mathcal{R}_{s}=0},
\\\label{sf3}
    SPSC^{III}&=1-SOP^{III}\big|_{\mathcal{R}_{s}=0}.
\end{align}
\end{subequations}
\begin{remark}
If the secrecy capacity is strictly positive, it means that there is a positive rate at which covert communication can be maintained securely. The expressions in \eqref{sf1},  \eqref{sf2}, and \eqref{sf3} can be used to indicate the probability of the existence of such secrecy capacity in scenarios, where it is difficult to achieve perfect secrecy.
\end{remark}

\section{ Effective Secrecy Throughput}
When the source transmits information at a constant rate $\mathcal{R}_{s}$, then EST can be defined for the three considered scenarios as \cite {lei2020secure}
\begin{subequations}
\begin{align}
\label{kd1}
    EST^{I}&=\mathcal{R}_{s}(1-SOP^{I}),
\\\label{kd2}
    EST^{II}&=\mathcal{R}_{s}(1-SOP^{II}),
\\\label{kd3}
    EST^{III}&=\mathcal{R}_{s}(1-SOP^{III}).
\end{align}    
\end{subequations}
\begin{remark}
EST is a parameter employed in the field of information-theoretic security to measure the rate at which secret information may be transmitted over a channel while preserving a particular level of security. It is evident from \eqref{kd1}, \eqref{kd2}, and \eqref{kd3} that an increase in $\mathcal{R}_{s}$ results in a proportional increase in the SOP, and conversely, a decrease in $\mathcal{R}_{s}$ leads to a reduced SOP. This implies the existence of an optimal value for $\mathcal{R}_{s}$ that maximizes the EST under the given circumstances. Making use of \eqref{kd1}, \eqref{kd2}, and \eqref{kd3}, one can easily obtain the optimal $\mathcal{R}_{s}$ and maximum EST.
\end{remark}

%%%%%%%%%%%%%%%%%<SECTION>%%%%%%%%%%%%%%%

%%%%%%%%%%%%%%%%%<SECTION>%%%%%%%%%%%%%%%
\section{Numerical Results}

This section investigates the impact of various parameters on the security performance of an RIS-assisted UAV-NTN communication within the RF NTN-UOWC system. The parameters under consideration encompass fading parameters, the number of reflecting elements, RF link distance, thermal gradients, detection methods, UWT, pointing errors, and more. The objective is to comprehensively understand the intricate interplay between these factors and the overall security of the communication system. The examination involves figurative illustrations of numerical instances making use of the derived expressions, as represented in  \eqref{eq:eq31}, \eqref{eq:eq41}, \eqref{eq:eqasy222}, \eqref{eq:eq51}, \eqref{sf1}-\eqref{sf3}, and \eqref{kd2} which offer a visual representation, aiding in the interpretation of the numerical impact of each parameter on the security performance. To validate the accuracy of the newly derived expressions, MC simulations are conducted using MATLAB. In our general case assumptions, specific values are assigned to the parameters: $\alpha_{j}=2$, $\mu_{j}=2$, $\kappa_{j}=1$, $D_{j}=50$m, $L_{j}\geq1$,  $\mathcal{N}=2$, $\bar{\gamma}_{j}\geq 0$, $\bar{\gamma}_{n}\geq 0$, $r_{n}= {1,2}$, $\mathcal{R}_{s}=0.01$, and $\xi_{n}=1$. Furthermore, the UWT parameters crucial to our analysis are extracted from Table~3 in \cite{badrudduza2021security}.

\subsection{Impacts of UAV-NTN Link Parameters}
\begin{figure}[t]
\centerline{\includegraphics[width=0.37\textwidth,angle=0]{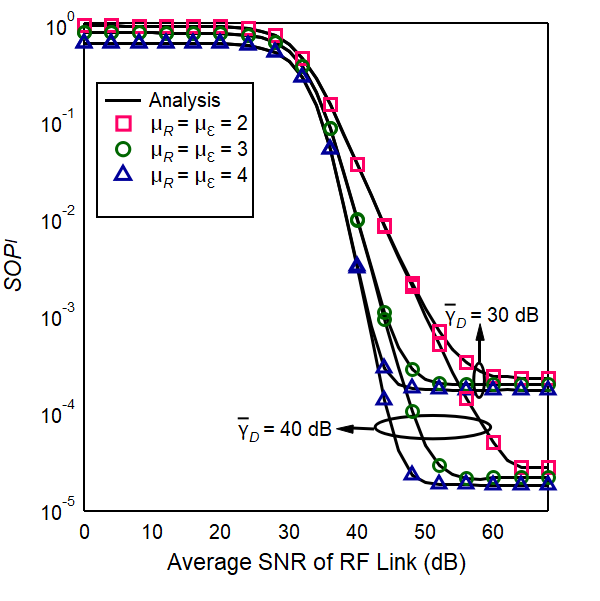}}
    \caption{The $SOP^{I}$ versus $\bar\gamma_{R}$ for selected values of $\mu_{R}$, $\mu_{\mathcal{E}}$ and $\bar\gamma_{\mathcal{D}}$.
}
    \label{fig:fig2}
\end{figure}
Figure \ref{fig:fig2} depicts the impact of the fading severity parameter on SOP by graphing the $SOP^{I}$ versus the average SNR of the main UAV-NTN link.  A notable observation from the figure emerges, demonstrating that as both $\mu_{R}$ and $\mu_{\mathcal{E}}$ are concurrently increased, there is a corresponding decrease in the secrecy outage. This reduction stems from a more challenging environment for an eavesdropper to intercept the transmitted signal effectively due to the diminished channel fading resulting from the augmentation of both $\mu_ {R}$ and $\mu_{\mathcal{E}}$, indicating increased resilience of the communication system against potential security breaches.

\begin{figure}[t]
\centerline{\includegraphics[width=0.37\textwidth,angle=0]{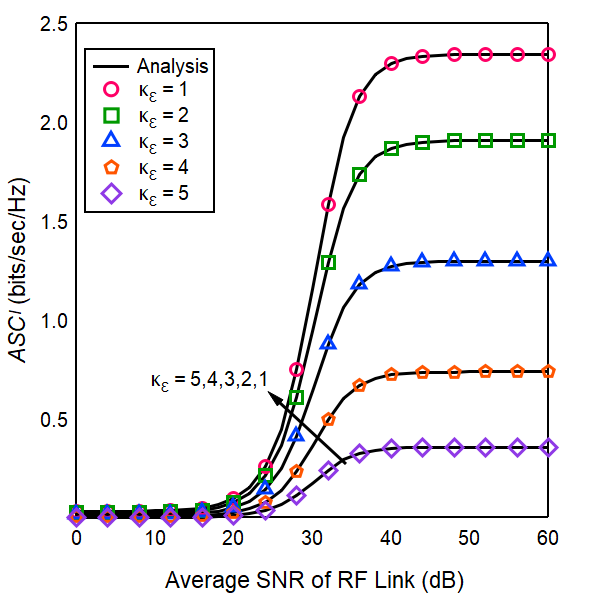}}
    \caption{The $ASC^{I}$ versus $\bar\gamma_{R}$ for selected values of $\kappa_{\mathcal{E}}$.
}
    \label{fig:fig4}
\end{figure}
The influence of small-scale fading effects due to multipath propagation on the $ASC^{I}$ is depicted in Fig. \ref{fig:fig4}. It has been observed that the ASC experiences a decline when $\kappa_{\mathcal{E}}$ is increased from $1$ to $5$. This outcome is expected, as the total fading of the eavesdropper channel diminishes with the escalation of $\kappa_{\mathcal{E}}$ towards infinity, leading to a reduction in ASC.

\begin{figure}[t]
\centerline{\includegraphics[width=0.37\textwidth,angle=0] {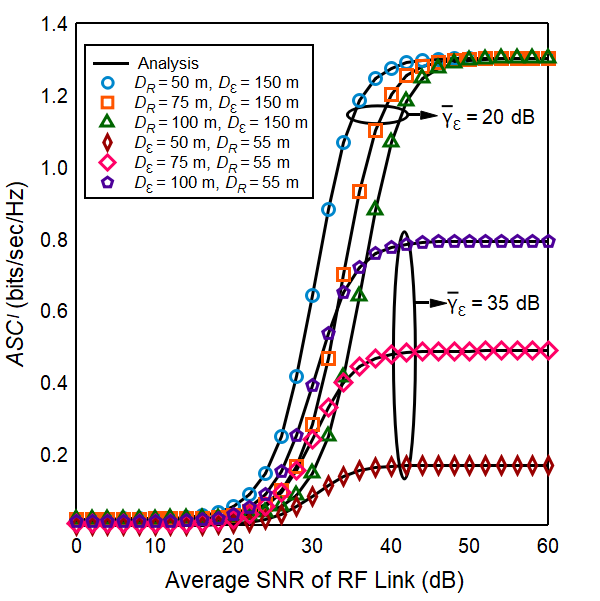}}
\caption{The $ASC^{I}$ versus $\bar\gamma_{R}$ for selected values of $D_{R}$, $D_{\mathcal{E}}$ and $\bar\gamma_{\mathcal{E}}$.
}
\label{fig:fig5}
\end{figure}
The impact of link distance in wireless communication is a pivotal factor, as demonstrated in Figure \ref{fig:fig5}, where $ASC^{I}$ is plotted against varying $D_{\mathcal{R}}$ and $D_{\mathcal{E}}$. When $D_{\mathcal{E}}$ is maintained at a constant value, the ASC is observed to decrease with an increase in $D_{\mathcal{R}}$. Conversely, when $D_{\mathcal{R}}$ is fixed, the ASC exhibits an increase with a rise in $D_{\mathcal{E}}$. This behavior is expected, considering that the signal strength at the legitimate receiver ($\mathcal{R}$) diminishes with an increase in $D_{\mathcal{R}}$ due to factors such as path loss, fading, diffraction, and scattering in free-space propagation. Moreover, heightened path loss in the eavesdropper link reduces the probability of successful eavesdropping, thereby contributing to an enhancement in secrecy. In addition to the link distance, the influence of $\bar{\gamma}_{\mathcal{E}}$ is also examined in the figure, revealing a noteworthy trend of decreased ASC with an increased value of $\bar{\gamma}_{\mathcal{E}}$. This phenomenon arises due to the higher values of $\bar{\gamma}_{\mathcal{E}}$ indicating an improved wiretap link, signifying enhanced eavesdropping capability for $\mathcal{E}$.

The impact of diversity on secrecy capacity is demonstrably positive, as illustrated in Figure~\ref{fig:fig66} by plotting $SOP^{I}$ against the UOWC link's average SNR. The plot shows that, compared to a single-branch receiver ($L_{R}=L_{\mathcal{E}}=1$), incorporating additional branches ($L_{R}=L_{\mathcal{E}} \geq 1$) at both $\mathcal{R}$ and $\mathcal{E}$ significantly improves the overall signal reception quality at $\mathcal{R}$ by reducing the probability of an outage. This improvement is due to the diversity reception technique, which effectively mitigates the fading effects and enhances the robustness of the main link. Moreover, it is evident from the figure that the impact of simultaneously increasing the number of antennas at both $\mathcal{R}$ and $\mathcal{E}$ is more advantageous for the main channel than for the eavesdropper channel. This indicates that the diversity reception technique is more critical for the main link, as it provides a more substantial improvement in signal quality and secrecy capacity compared to the eavesdropper link. Thus, the addition of multiple branches at the receiver end not only enhances the main link's performance but also contributes to a higher level of security by lowering the secrecy outage probability.

\subsection{Impacts of RIS-aided UOWC Link Parameters}

\begin{figure}[t]
\centerline{\includegraphics[width=0.37\textwidth,angle=0]{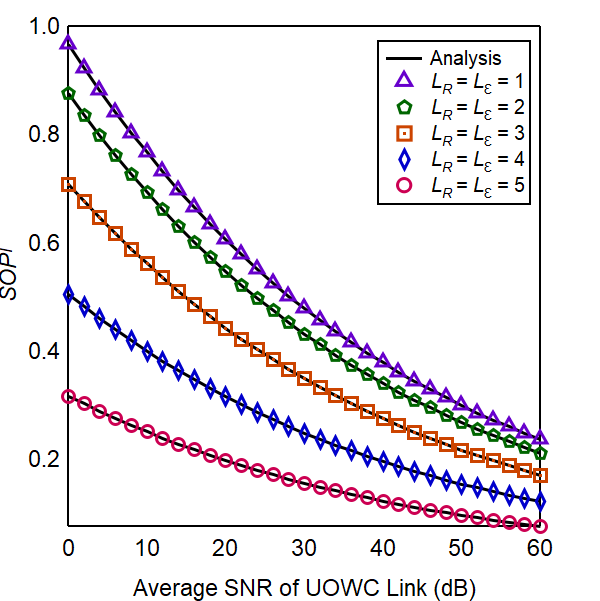}}
    \caption{The $SOP^{I}$ versus $\bar\gamma_{\mathcal{D}}$ for selected values of $L_{R}$ and $L_{\mathcal{E}}$.}
    \label{fig:fig66}
\end{figure}

\begin{figure}[t]
\centerline{\includegraphics[width=0.37\textwidth,angle=0]{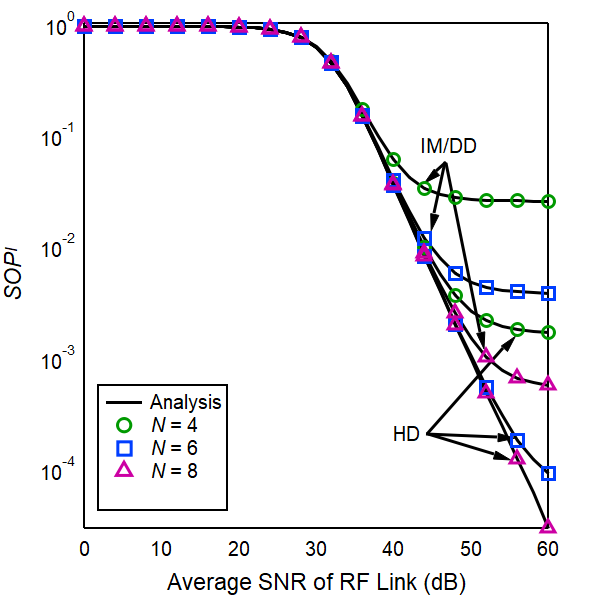}}
    \caption{The $SOP^{I}$ versus $\bar\gamma_{R}$ for selected values of $\mathcal{N}$ and $r_{n}$.}
    \label{fig:fig3}
\end{figure}

The effect of the number of reflecting elements is demonstrated in Fig. \ref{fig:fig3} in terms of $SOP^{I}$. It is evident that the SOP experiences a significant decrease with an increase in $\mathcal{N}$, as corroborated by \cite{rahman2023ris}. This reduction is attributed to the enhanced capabilities provided by a larger value of $\mathcal{N},$ allowing for more sophisticated beamforming and directional control of reflected signals. The increased $\mathcal{N}$ provides opportunities to exploit diversity and multipath effects in the wireless channel, resulting in a higher SNR at $\mathcal{D}$. Additionally, adjusting the phase shifts and amplitudes of individual elements enables the RIS to steer the transmitted signal in a specific direction, thereby reducing the probability of signal leakage to the intruders. Figure \ref{fig:fig3} also depicts a comparative analysis between HD and IM/DD techniques in terms of the SOP. Remarkably, the HD method outperforms the IM/DD method in achieving lower outage behavior more effectively, resulting in a significant enhancement in secrecy performance. The superiority of the HD method can be attributed to its integration of a local oscillator, facilitating the conversion of optical frequency to an intermediate frequency for ease of detection. This integration results in a higher SNR at the receiver compared to the IM/DD method. Additionally, the HD technique generally offers higher sensitivity due to its capability to recover both amplitude and phase information for which it is always favoured for high-speed and long distance communication.

\begin{figure}[t]
\centerline{\includegraphics[width=0.37\textwidth,angle=0]{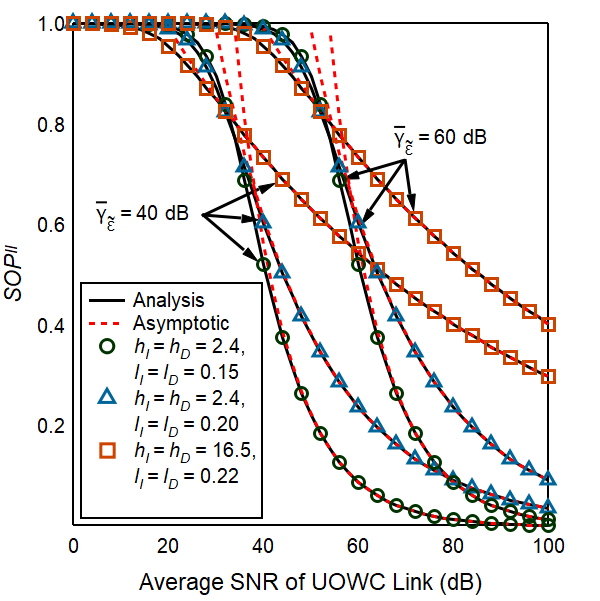}}
    \caption{The $SOP^{II}$ versus $\bar\gamma_{\mathcal{D}}$ for selected values of $h_{I}$, $h_{\mathcal{D}}$ and $\bar\gamma_{\mathcal{\tilde{E}}}$. 
    }
    \label{fig:fig6}
\end{figure}

\begin{figure}[t]
\centerline{\includegraphics[width=0.37\textwidth,angle=0]{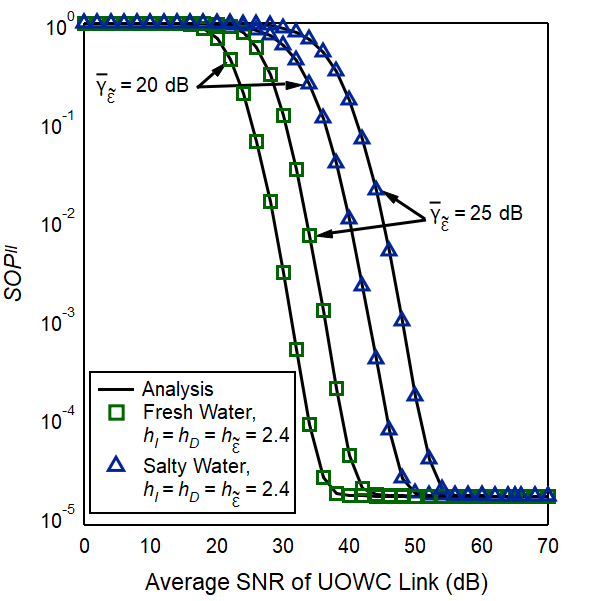}}
    \caption{The $SOP^{II}$ versus $\bar\gamma_{\mathcal{D}}$ for selected values of $h_{I}$, $h_{\mathcal{D}}$, $h_{\mathcal{\tilde{E}}}$ and $\bar\gamma_{\mathcal{\tilde{E}}}$.
}
    \label{fig:fig7}
\end{figure}
The investigation, centering on analyzing the impact of various UWT scenarios within both a thermal gradient and a thermally uniform system, is carried out by plotting the $SOP^{II}$ against $\gamma_{\mathcal{D}}$ in Figs. \ref{fig:fig6} and \ref{fig:fig7}. In Fig. \ref{fig:fig6}, a fixed $h_{\mathcal{\tilde{E}}}$ and $l_{\mathcal{\tilde{E}}}$, valued at 2.4 (L/min) and 0.10 (C $cm^{-1}$) respectively, are assumed. Notably, as the temperature gradient increases from 0.15 to 0.22, the SOP exhibits an increase, leading to a degradation in system performance. This is attributed to the amplification of irradiance variations with an increase in the temperature gradient \cite{badrudduza2021security}. The incorporation of asymptotic analysis provides further insights, and remarkably, the outcomes from simulations, asymptotic approaches, and analytical methods closely align, especially at high SNR. In Fig. \ref{fig:fig7}, the impact of UWT in two different types of water, fresh and salty, is demonstrated. It is evident that the increased salinity of water results in a higher SOP, thereby deteriorating channel performance. The findings indicate that freshwater environments are more resilient during outages compared to saltwater environments.

\begin{figure}[t]
\centerline{\includegraphics[width=0.37\textwidth,angle=0]{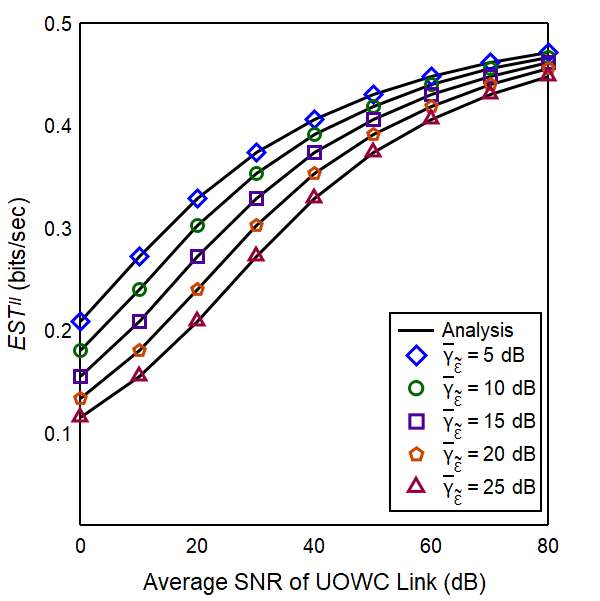}}
    \caption{The $EST^{II}$ versus $\bar\gamma_{\mathcal{D}}$ for selected values of $\bar\gamma_{\mathcal{\tilde{E}}}$. 
}
    \label{fig:fig8}
\end{figure}
The impact of eavesdropper's average SNR on $EST^{II}$ against the average SNR of the UOWC link is illustrated in Fig. \ref{fig:fig8}. As observed, the EST experiences a reduction when the average SNR of the eavesdropper increases from 5 dB to 25 dB. This outcome indicates that higher values of the eavesdropper's average SNR enhance the corresponding link's overhearing capability, subsequently declining the overall secrecy performance.

\begin{figure}[t]
\centerline{\includegraphics[width=0.37\textwidth,angle=0]{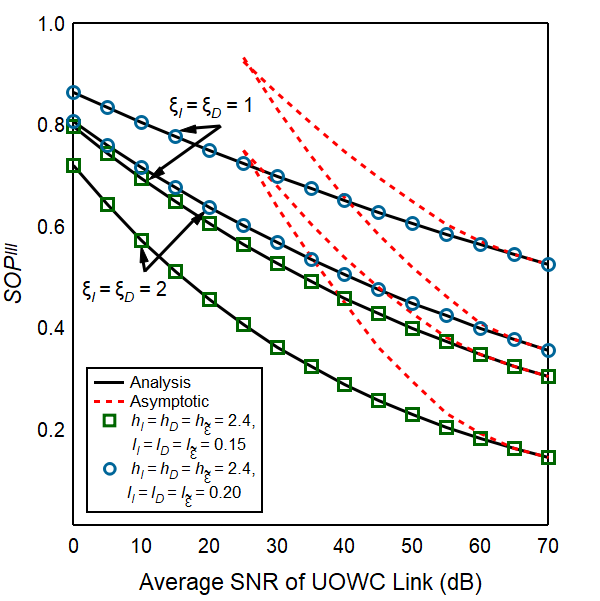}}
    \caption{The $SOP^{III}$ versus $\bar\gamma_{\mathcal{D}}$ for selected values of $h_{\mathcal{R}}$, $h_{\mathcal{D}}$, $h_{\mathcal{\tilde{E}}}$,     $\xi_{R}$ and $\xi_{\mathcal{D}}$.
}
    \label{fig:fig9}
\end{figure}
In optical wireless communication systems, pointing error refers to the deviation between the intended target and the actual direction in which a system is pointing. Figure \ref{fig:fig9} illustrates the $SOP^{III}$ against the average SNR of the UOWC link for a scenario involving simultaneous eavesdropping, showcasing the impacts of pointing errors at $D$ and $\mathcal{\tilde{E}}$. When $\xi_{\mathcal{\tilde{E}}}$ is held constant at 2, it is evident that the SOP decreases as the value of $\xi_{\mathcal{D}}$ increases from 1 to 2, thereby enhancing system performance. Similar results were also shown in \cite{ahmed2023enhancing}. Conversely, when $\xi_{\mathcal{D}}$ is fixed at 2, the SOP decreases with the growing value of $\xi_{\mathcal{\tilde{E}}}$ from 1 to 2, indicating declining misalignment error between the transmitter and the eavesdropper, thereby degrading the system performance. The asymptotic SOP is presented alongside analytical and simulation results, revealing a close alignment, particularly in the high SNR regime. This observation signifies a robust agreement with the analytical findings, emphasizing the reliability and accuracy of the asymptotic analysis in capturing system behavior, especially under conditions of high SNR.

\subsection{Comparative Analysis}
\begin{figure}[t]
\centerline{\includegraphics[width=0.37\textwidth,angle=0]{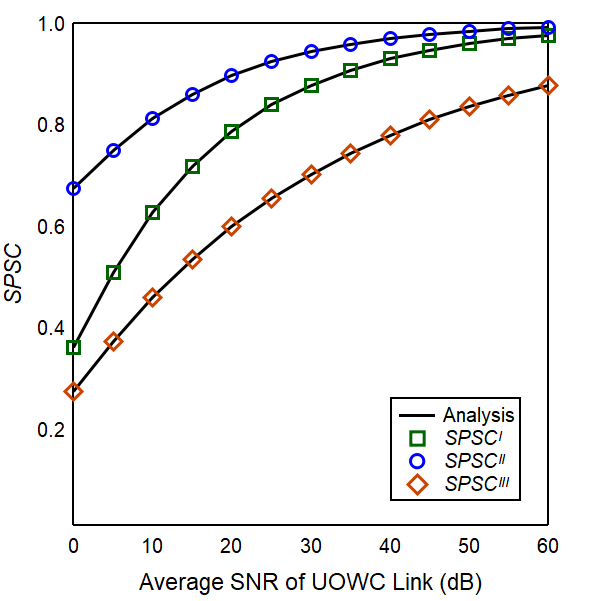}}
    \caption{The $SPSC$ versus $\bar\gamma_{\mathcal{D}}$ for three different scenario. 
}
    \label{fig:fig10}
\end{figure}

\begin{figure}[t]
\centerline{\includegraphics[width=0.37\textwidth,angle=0]{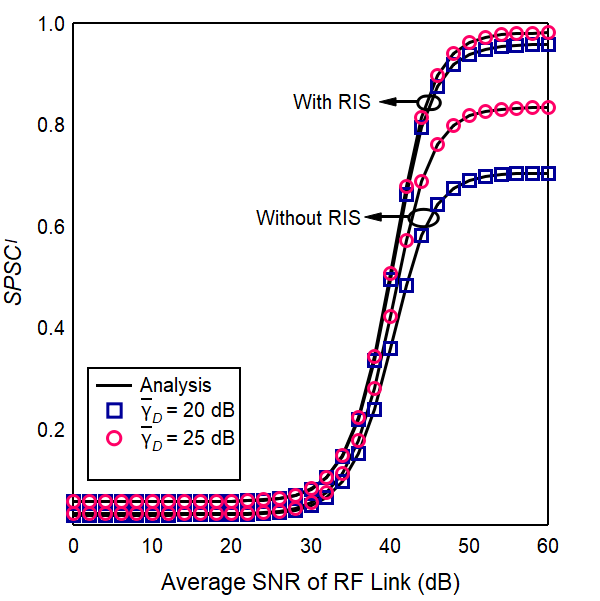}}
    \caption{The $SPSC^{I}$ versus $\bar\gamma_{\mathcal{R}}$ for selected values of $\bar\gamma_{\mathcal{D}}$.  
}
    \label{fig:fig11}
\end{figure}
The SPSC is plotted against $\bar{\gamma}_{\mathcal{D}}$ in Figure \ref{fig:fig10} and $\bar{\gamma}_{\mathcal{R}}$ in Figure \ref{fig:fig11}. A comparative analysis among three eavesdropping scenarios is presented in Figure \ref{fig:fig10}, highlighting that Scenario III exhibits the poorest secrecy performance, suggesting that simultaneous eavesdropping poses the most significant threat. Additionally, it is observed that the RF eavesdropper is more detrimental than the FSO eavesdropper, implying that the FSO link is more susceptible to eavesdropping than the RF link. Figure \ref{fig:fig11} presents a comparison between Scenario I and the same scenario without the inclusion of the RIS, revealing that better secrecy performance can be achieved when communication occurs with the aid of the RIS.

\subsection{Design Guideline}
Based on the results of the investigation into the secrecy performance of the proposed system, here are some design guidelines for consideration:

\subsubsection{RF UAV-NTN Link Modeling} 
The security of the proposed system can be ensured by choosing higher values of fading severity parameters, understanding small-scale fading effects to make a balance between the total fading and the multipath exploitation, and finally strategically selecting optimal link distances. 

\subsubsection{UOWC Link Modeling} 
In the case of the UOWC system, an increase in the number of reflecting elements in the RIS for sophisticated beamforming, considering the HD method for improved secrecy, and mitigating UWT effects to ensure robust performance in different underwater environments, considering factors such as thermal gradients and water salinity can significantly enhance secrecy. Besides, the smaller pointing errors must be addressed to obtain a better alignment between the transmitter and receiver, and the scenarios with better secrecy performance must be prioritized in the system design.

These recommendations, which address various system parameters for three considered scenarios to obtain a balance between the complexities faced by the system, utilization of limited resources, and overall security will certainly help the design engineers optimize the secrecy performance of the proposed system.

\section{Conclusions}
This study addresses a RF NTN-UOWC model that considers dynamic sources, e.g., UAVs,  and incorporates an RIS in UOWC environments. By meticulously examining three potential eavesdropping scenarios, e.g., RF NTN, UOWC, and simultaneous, we have provided valuable insights into the vulnerabilities and mitigation strategies of such networks in terms of closed-form expressions for some secrecy metrics, i.e., ASC, SOP, EST, and SPSC. Through numerical analyses and simulations, our findings highlight the significance of strategic parameter manipulation in enhancing system security, considering factors such as fading severity, number of reflecting elements, UWT, pointing errors, and detection techniques.  It is observed that by exploiting an RIS technology, we can manipulate signal propagation and minimize signal leakage to potential eavesdroppers, thereby enhancing the security of the communication system. A comparison between RIS-enabled and non-RIS systems provides remarkable insights into the efficacy of an RIS deployment in bolstering security. Furthermore, the study offers valuable design guidelines for the implementation of secure communication technologies, which can pave the way for safer and more reliable wireless communication systems in the future.

\appendices

\section{Proof of Lemma 1}
\label{A1}
In order to obtain the PDF of $\gamma_{j}$, we have to first find the PDF of $Z_{jl}=\left | h_{jl} \right |^{2} q_{j}^{-\alpha_{j}}$. Let's assume that $W_{jl}=\left | h_{jl} \right |^{2}$ and $Y_{j}=q_{j}^{-\alpha_{j}}$. Utilizing the transformation of RV, the PDF of $W_{jl}$ and $Y_{j}$ can be determined, respectively as follows.
\begin{align}
    f_{W_{jl}}(w)&=\frac{A_{j}}{2}(\sqrt{w})^{\mu _{j}-1}e^{-B_{j}w}I_{\mu_{j}-1}(M_{j}\sqrt{w})
    \label{eq:eq16}
\\
    f_{Y_{j}}(y)&=\frac{1}{\alpha _{j}}\sum_{i_{j}=1}^{m}\frac{C_{i_{j}}}{D_{j}^{\beta_{i_{j}}+1}}y^{-\frac{1}{\alpha _{j}}(\beta_{i_{j}}+1)-1},\quad D_{j}^{-\alpha_{j}}\leq q\leq \infty 
    \label{eq:eq17}
\end{align}
Hence, the PDF of $Z_{jl}$ can be obtained as
\begin{align}
    f_{Z_{jl}}(z)=\int_{D_{j}^{-\alpha_{j}}}^{\infty}\frac{1}{y}f_{W_{jl}}\left ( \frac{z}{y}\right )f_{Y_{j}}(y)dy.
      \label{eq:eq18}
\end{align}
Putting (\ref{eq:eq16}) and (\ref{eq:eq17}) to (\ref{eq:eq18}) results in
\begin{align}
\nonumber
f_{Z_{jl}}(z)&=\frac{A_{j}z^{\frac{1}{2}(\mu_{j}-1)}}{2\alpha_{j}}\sum_{i_{j}=1}^{m}\frac{C_{i_{j}}}{D_{j}^{\beta_{i_{j}}+1}}
\\\nonumber
&\times \int_{D_{j}^{-\alpha_{j}}}^{\infty} y^{-\frac{1}{2}(\mu_{j}-1)-\frac{1}{\alpha_{j}}(\beta_{i_{j}}+1)-2} 
\\       
&\times e^{-B_{j}zy^{-1}} I_{\mu_{j}-1}\left ( M_{j}\sqrt{z}y^{-\frac{1}{2}} \right ).
\label{eq:eq19}
\end{align}
Using the approximation of the Bessel function 
$I_{v_{j}}(x)=\sum_{k_{j}=0}^{p}V_{j}(k_{j},p,v_{j})\left ( \frac{x}{2} \right )^{v_{j}+2k_{j}}$, where $V_{j}(k_{j},p,v_{j})=\frac{\Gamma (p+k_{j})p^{1-2k_{j}}}{\Gamma (k_{j}+1)\Gamma(p-k_{j}+1)\Gamma (v_{j}+k_{j}+1)}
$ \cite{li2006new}, and assuming $u=y^{-1}$, (\ref{eq:eq19}) can be translated as follow. 
\begin{align}
    \nonumber
        f_{Z_{jl}}(z)&=\frac{A_{j}}{2\alpha_{j}}\sum_{i_{j}=1}^{m}\sum_{k_{j}=0}^{p}\frac{C_{i_{j}}}{D_{j}^{\beta_{i_{j}}+1}}V_{j}(k_{j},p,\mu _{j}-1)z^{\mu_{j}+k_{j}-1}
        \\
      &\times\left ( \frac{M_{j}}{2} \right )^{\mu_{j}+2k_{j}-1}\int_{0}^{D_{j}^{\alpha_{j}}}u^{\mu_{j}+k_{j}+\frac{1}{\alpha _{j}}(\beta_{i_{j}}+1)-1}e^{-B_{j}zu}du.
\end{align}
Now, the PDF of $Z_{j}$ can be deduced by utilizing \cite  [Eq.~(3.381.1)]{zwillinger2007table} as
\begin{align}
\nonumber
    f_{Z_{jl}}(z)&=\frac{A_{j}}{2\alpha_{j}}\sum_{i_{j}=1}^{m}\sum_{k_{j}=0}^{p}\frac{C_{i_{j}}}{D_{j}^{\beta_{i_{j}}+1}}V_{j}(k_{j},p,\mu _{j}-1)\\
    \nonumber
    &\times\left ( \frac{M_{j}}{2} \right )^{\mu_{j}+2k_{j}-1}
     z^{-\frac{1}{\alpha _{j}}(\beta_{i_{j}}+1)-1}B_{j}^{-\mu_{j}-k_{j}-\frac{1}{\alpha _{j}}(\beta_{i_{j}}+1)} 
    \\
    &\times \gamma\left ( \mu_{j}+k_{j}+\frac{1}{\alpha _{j}}(\beta_{i_{j}}+1), B_{j}zD_{j}^{\alpha_{j}}\right ),
    \label{eq:eq22}
\end{align} where $\gamma\left(.,.\right)$ indicates the lower incomplete Gamma function \cite [Eq.~(8.350.1)]{zwillinger2007table}.  For further simplification, employing \cite [Eq.~(8.4.16.1)] {prudnikov1988integrals} and \cite [Eq.~(8.2.2.15)] {prudnikov1988integrals} to (\ref{eq:eq22}) and following the transformation of RVs considering $\mu_{MRC_{j}}=L_{j}\mu_{j}$, $\kappa_{MRC_{j}}=\kappa_{j}$, $\bar\gamma_{MRC_{j}}=L_{j}\bar\gamma_{j}$ as stated in \cite{milisic2008outage}, the PDF of $\gamma_{j}$ can be expressed as shown in \eqref{eq:eq23}, where $\text{G}_{p,q}^{m,n} \left [ . \right ]$ indicates the Meiger's $G$-function \cite{zwillinger2007table}.

\section{Proof of Lemma 2}
\label{A2}
$F_{\gamma_{j}}(\gamma)$ can be determined through the integration of \eqref{eq:eq23} over the range [0, $\gamma_{j}$], employing the expression provided in \cite[Eq.~(2.24.2.2)]{prudnikov1988integrals} as illustrated in \eqref{eq:eq24}.

\section{Proof of Lemma 3}
\label{A3}
Making use of the identity of \cite [Eq.~(8.2.2.14)] {prudnikov1988integrals} and applying \cite [Eq.~(41)] {ansari2015performance} for expanding the Meijer's $G$ function, the asymptotic CDF of $\gamma_{j}$ can be derived as shown in \eqref{eq:eq18n}.

\bibliographystyle{IEEEtran}
\bibliography{IEEEabrv,main.bib}

\end{document}